\newtheorem{theorem}{Theorem}
\newtheorem{lemma}[theorem]{Lemma}
\newtheorem{conjecture}{Conjecture}
\newtheorem{observation}{Observation}
\newtheorem{corollary}[theorem]{Corollary}
\newtheorem{claim}{Claim}
\declaretheoremstyle[spaceabove=\topsep, spacebelow=\topsep, headfont=\itshape]{mystyle}
\newcommand{\source}{{\sf s}}
\newcommand{\target}{{\sf t}}
\definecolor{palette_bleuf}{HTML}{326C85}
\definecolor{palette_bleuc}{HTML}{679FB8}
\definecolor{palette_vert}{HTML}{399C54}
\definecolor{palette_jaune}{HTML}{FCD63E}
\definecolor{palette_beige}{HTML}{FAFEFF}
\definecolor{palette_rouge}{HTML}{9E0A1A}
\definecolor{palette_rose}{HTML}{FF9999}
\providecommand{\keywords}[1]{\textbf{Keywords:} #1}
\newenvironment{proofclaim}[1][]%
    {\noindent \emph{Proof.} {}{#1}{}}{\hfill
    $\Diamond$\vspace{1em}}
\title{Linear transformations between dominating sets in the TAR-model\thanks{This work was supported by ANR project GrR (ANR-18-CE40-0032).}}
\author[1]{Nicolas Bousquet}
\author[1]{Alice Joffard}
\author[2]{Paul Ouvrard}
\affil[1]{CNRS, LIRIS, Universit\'e de Lyon, Universit\'e Claude Bernard Lyon 1, Lyon, France
\thanks{firstname.lastname@liris.cnrs.fr}}
\affil[2]{Univ. Bordeaux, Bordeaux INP, CNRS, LaBRI, UMR5800, F-33400 Talence, France \thanks{paul.ouvrard@u-bordeaux.fr}}
\date{}
\begin{document}

\maketitle

\begin{abstract}
Given a graph $G$ and an integer $k$, a token addition and removal ({\sf TAR} for short) reconfiguration sequence between two dominating sets $D_\source$ and $D_\target$ of size at most $k$ is a sequence $S= \langle D_0 = D_\source, D_1 \ldots, D_\ell = D_\target \rangle$ of dominating sets of $G$ such that any two consecutive dominating sets differ by the addition or deletion of one vertex, and no dominating set has size bigger than $k$. 

We first improve a result of Haas and Seyffarth~\cite{Haas17}, by showing that if $k=\Gamma(G)+\alpha(G)-1$ (where $\Gamma(G)$ is the maximum size of a minimal dominating set and $\alpha(G)$ the maximum size of an independent set), then there exists a linear {\sf TAR} reconfiguration sequence between any pair of dominating sets. 

We then improve these results on several graph classes by showing that the same holds for $K_{\ell}$-minor free graph as long as $k \ge \Gamma(G)+O(\ell \sqrt{\log \ell})$ and for planar graphs whenever $k \ge \Gamma(G)+3$.
Finally, we show that if $k=\Gamma(G)+tw(G)+1$, then there also exists a linear transformation between any pair of dominating sets.

\keywords{reconfiguration, dominating sets, addition removal, connectivity, diameter, minor, treewidth.}

\end{abstract}

\section{Introduction}
\paragraph{General introduction.}
{\em Reconfiguration problems} model dynamic situations where we are given an instance $\mathcal{I}$ of a combinatorial search problem $\Pi$ and we want to find a step-by-step transformation between feasible solutions of $\mathcal{I}$ such that each intermediate solution satisfies the two following properties (i) it is also a feasible solution of $\mathcal{I}$; and (ii) it is obtained from the previous one by applying a specified (and unique) rule, called {\em reconfiguration rule}. Such a transformation between two solutions $S_\source$ and $S_\target$ of $\mathcal{I}$ is called a {\em reconfiguration sequence between $S_\source$ and $S_\target$}, and is denoted by $\langle S_0 = S_\source, S_1, S_2, \ldots, S_\ell = S_\target \rangle$. A reconfiguration sequence does not always exist and some solutions may even be frozen, meaning that they cannot be modified at all. Ito et al.~\cite{Ito11} initiated a systematic study of the complexity of reconfiguration problems. For a more complete overview of the field, the reader is referred to the surveys of Van den Heuvel~\cite{Heuvel13}, Nishimura~\cite{Nishimura18}, or Mynhardt and Nasserasr~\cite{MynhardtSurvey}.

\medskip
It is often interesting to study reconfiguration problems by looking at the {\em reconfiguration graph}. The vertices of the reconfiguration graph are the feasible solutions of the instance $\mathcal{I}$ of the problem $\Pi$, and two vertices (solutions of $\mathcal{I}$) are adjacent if and only if one solution can be obtained from the other by applying the specified reconfiguration rule. In this paper, we focus on the reconfiguration of dominating sets. A \emph{dominating set} is a subset $D$ of vertices such that each vertex is in $D$ or has at least one neighbor in $D$. One can represent a dominating set as a set of tokens, where exactly one token is placed on each vertex that is part of the dominating set. Then, one needs to define an operation that allows to transform a dominating set into another one. In the literature, three kinds of operations have mainly been studied: {\em Token Sliding} (at each step, one can slide exactly one token along an edge), {\em Token Jumping} (at each step, one can move exactly one token to any vertex which does not already contain a token), or {\em Token Addition and Removal} (at each step, one can add exactly one token or remove exactly one token).
One can observe that, for the first two rules, the size of each solution remains the same all along the transformation while it is modified at each step in the token addition and removal operation. 
In this paper, we only consider the token addition and removal rule, denoted by {\sf TAR} for short. 

\paragraph{Dominating set reconfiguration.}
One can indeed always transform a solution $S_\source$ into another one $S_\target$ if we do not bound the maximum size of the intermediate solutions: we first add one by one all the vertices in $S_\target \setminus S_\source$ to $S_\source$, and then remove each vertex in $S_\source \setminus S_\target$. If tokens are agents or equipment, there is not necessarily enough agents to perform this transformation. The problem becomes much harder when we have a threshold on the size of each solution we cannot exceed. 

Let $G=(V,E)$ be a graph, and $k$ be an integer. The $k$-reconfiguration graph (also known as {\em $k$-dominating graph}) is a graph $\mathcal{R}_k(G)$ whose vertices are the dominating sets of $G$ of size at most $k$, and two dominating sets $D_1$ and $D_2$ are adjacent if and only if the size of their symmetric difference $|D_1 \, \triangle \, D_2|$ is equal to one. In other words, $D_2$ can be obtained from $D_1$ by removing or adding exactly one token. Hence, there exists a reconfiguration sequence between two dominating sets  $D_\source$ and $D_\target$ both of size at most $k$ under the {\sf TAR} rule with threshold $k$ (denoted by {\sf TAR}($k$) rule for short) if and only if there is a path in $\mathcal{R}_k(G)$ between $D_\source$ and $D_\target$.  

\begin{figure}[bt]
    \centering
    \includegraphics[width=0.98\textwidth]{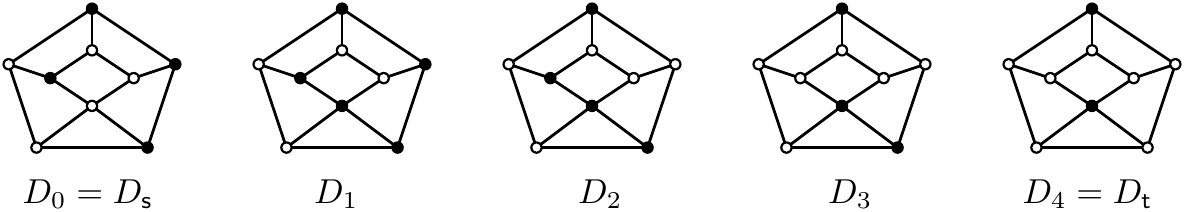}
    \caption{Reconfiguration sequence between two dominating sets $D_\source$ and $D_\target$ under the {\sf TAR}(5) rule; the dominating sets are depicted by the set of black vertices.}
    \label{fig:sequence}
\end{figure}

Let $G$ be a graph. We denote by $\Gamma(G)$ the maximum size of a dominating set which is minimal by inclusion. Determining upper bounds on $k$ that guarantee that the $k$-reconfiguration graph $\mathcal{R}_k(G)$ is connected has received a lot of attention. Haas and Seyffarth proved in~\cite{HaasS14} that being reconfigurable is not a monotone property, which means that if $\mathcal{R}_k(G)$ is connected then $\mathcal{R}_{k+1}(G)$ is not necessarily connected. Indeed, let us denote by $K_{1,n}$ the star graph on $n+1$ vertices, and note that $\Gamma(K_{1,n}) = n$. They observed that, for every $n \ge 3$, $\mathcal{R}_k(K_{1,n})$ is connected if $1 \le k \le n-1$. But $\mathcal{R}_n(K_{1,n})$ is not connected since the dominating set of size $n$ which contains all the degree-one vertices is frozen, i.e. it is an isolated vertex in $\mathcal{R}_n(K_{1,n})$. They then asked what is the smallest integer $d_0$ such that $\mathcal{R}_k(G)$ is connected, for any $k \ge d_0$. They proved the following:

\begin{lemma}[\cite{HaasS14}]
Let $G$ be a graph. If $k > \Gamma(G)$ and $\mathcal{R}_k(G)$ is connected, then $\mathcal{R}_{k+1}(G)$ is connected.
\end{lemma}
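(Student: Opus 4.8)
The plan is to derive the connectivity of $\mathcal{R}_{k+1}(G)$ from that of $\mathcal{R}_k(G)$, using the fact that every dominating set of size $k+1$ is too large to be minimal and hence can be shrunk by one token.

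\emph{First,} I would observe that any oversized dominating set can be reduced in a single step. Since $k > \Gamma(G)$, we have $k+1 > \Gamma(G)$, so a dominating set $D$ with $|D| = k+1$ has size strictly larger than the maximum size $\Gamma(G)$ of a minimal dominating set; thus $D$ is not minimal, and by definition there is a vertex $v \in D$ such that $D \setminus \{v\}$ is still a dominating set. Removing the token on $v$ is a legal {\sf TAR}($k+1$) move (the size drops from $k+1$ to $k$ and never exceeds $k+1$), and it produces a dominating set of size $k$, i.e. a vertex of $\mathcal{R}_k(G)$. If instead $|D| \le k$, then $D$ already belongs to $\mathcal{R}_k(G)$ and nothing needs to be done. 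In either case, from any vertex $D$ of $\mathcal{R}_{k+1}(G)$ one reaches, in at most one {\sf TAR}($k+1$) step, a vertex $D'$ of $\mathcal{R}_k(G)$.

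\emph{Then,} given two dominating sets $D_\source$ and $D_\target$ of size at most $k+1$, I would apply this reduction to each of them to obtain dominating sets $D_\source'$ and $D_\target'$ of size at most $k$, each reachable from the corresponding original set by at most one {\sf TAR}($k+1$) move. Since $\mathcal{R}_k(G)$ is connected, there is a path $P$ from $D_\source'$ to $D_\target'$ in $\mathcal{R}_k(G)$, and every dominating set along $P$ has size at most $k \le k+1$, so $P$ is also a valid {\sf TAR}($k+1$) reconfiguration sequence. Concatenating the reduction from $D_\source$ to $D_\source'$, then $P$, then the reversal of the reduction from $D_\target$ to $D_\target'$ yields a {\sf TAR}($k+1$) reconfiguration sequence between $D_\source$ and $D_\target$. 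As this works for every pair, $\mathcal{R}_{k+1}(G)$ is connected.

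\emph{The only point requiring care} — there is no real obstacle here — is the use of the hypothesis $k > \Gamma(G)$ in the reduction step to guarantee that a size-$(k+1)$ dominating set is non-minimal; this is precisely where monotonicity breaks down for smaller thresholds, as witnessed by the frozen size-$n$ dominating set of $K_{1,n}$. One also relies on the elementary observation that any {\sf TAR}($k$) reconfiguration sequence is a fortiori a {\sf TAR}($k+1$) reconfiguration sequence.
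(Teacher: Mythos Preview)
Your argument is correct and is exactly the standard one. Note that the paper does not actually prove this lemma: it is quoted from Haas and Seyffarth~\cite{HaasS14} without proof, so there is no ``paper's own proof'' to compare against. Your reduction---shrinking any size-$(k+1)$ dominating set by one vertex (possible since $k+1>\Gamma(G)$ forces non-minimality), then invoking connectivity of $\mathcal{R}_k(G)$ and lifting the resulting {\sf TAR}($k$) sequence to {\sf TAR}($k+1$)---is precisely the intended argument, and your remark about where the hypothesis $k>\Gamma(G)$ is used is apt.
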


Moreover, they proved that if $G$ has at least two independent edges, then $d_0 \le \min\{n-1, \Gamma(G)+\gamma(G)\}$, $\gamma(G)$ being the size of a minimum dominating set of $G$. They also showed that this value can be lowered to $\Gamma(G)+1$ if $G$ is bipartite or a chordal graph. This result is tight since $K_{1,n}$ is bipartite and chordal and $\mathcal{R}_k(K_{1,n})$ is not connected. They asked if this result can be generalized to any graph. Suzuki et al.~\cite{SuzukiMN16} answered negatively this question by constructing an infinite family of graphs for which $\mathcal{R}_{\Gamma(G)+1}(G)$ is not connected. Mynhardt et al.~\cite{Mynhardt19} improved this result by constructing two infinite families of graphs:

\begin{itemize}
    \item the first construction provides graphs with arbitrary $\Gamma \ge 3$, arbitrary domination number in the range $2 \le \gamma \le \Gamma$ such that $d_0 = \Gamma + \gamma -1$
    \item the second one gives graphs with arbitrary $\Gamma \ge 3$, arbitrary domination number in the range $1 \le \gamma \le \Gamma-1$ for which $d_0= \Gamma + \gamma$. For $\gamma \ge 2$, this is the first construction of graphs with $d_0= \Gamma + \gamma$. 
\end{itemize}

On the positive side, Haas and Seyffarth~\cite{Haas17} proved that if $k = \Gamma(G)+\alpha(G)-1$ (where $\alpha(G)$ is the size of a maximum independent set of $G$), then $\mathcal{R}_k(G)$ is connected. To obtain this result, they proved that all the independent dominating sets of $G$ are in the same connected component of $\mathcal{R}_{\Gamma(G)+1}(G)$. Recall that if $G$ has at least two independent edges, then $d_0 \le \min\{n-1, \Gamma+\gamma(G)\}$. It implies that the aforementioned value of $d_0$ obtained by Mynhardt et al. in \cite{Mynhardt19} is the best we can hope for in the general case since $d_0 \le \min\{\Gamma(G)+\gamma(G), 2\Gamma(G)-1\}$ holds for any graph $G$. 

Haddadan et al.~\cite{Haddadan16} studied the algorithmic complexity of the problem. They proved that, given a graph $G$, two dominating sets $D_\source$ and $D_\target$ of $G$ and an integer $k \ge \max\{|D_\source|,|D_\target|\}$, it is PSPACE-complete to decide whether there exists a path in $\mathcal{R}_k(G)$ between $D_\source$ and $D_\target$. Actually, this problem remains PSPACE-complete even restricted to bipartite graphs or split graphs. On the other hand, they proved that this problem can be decided in linear time if the input graph is a tree, an interval graph or a cograph. 

\medskip
Mouawad et al.~\cite{Mouawad16} studied the problem from a parameterized point of view. They proved that this problem is W[2]-hard parameterized by $k+\ell$, where $k$ is the threshold and $\ell$ the size of the desired reconfiguration sequence. On the positive side, Lokshtanov et al.~\cite{Lokshtanov18} gave an FPT algorithm parameterized by $k$ for graphs excluding $K_{d,d}$ as a subgraph, for any constant $d$. Finally, Blanch\'e et al.~\cite{Blanche20} studied the complexity and parameterized complexity of an optimization variant originally introduced by Ito et al.~\cite{Ito19} for the independent set reconfiguration problem.

\paragraph{Our contribution.}
Let $G=(V,E)$ be a graph on $n$ vertices.
In Section \ref{sec:UB}, we show that if $k=\Gamma(G)+\alpha(G)-1$, then $\mathcal{R}_k(G)$ has linear diameter, improving a previous result of Haas and Seyffarth~\cite{Haas17} which only proved that $\mathcal{R}_k(G)$ is connected but did not give any bound on the diameter\footnote{Their induction based proof does not provide a linear diameter.}. Note that the proof is algorithmic, and outputs such a transformation in polynomial time.
It contrasts in particular with a result of Suzuki et al.~\cite{SuzukiMN16} who provided an infinite family of graphs $G_n$ of linear size for which $\mathcal{R}_{\gamma+1}(G)$ has diameter $\Omega(2^n)$. 

In Section~\ref{sec:minorUB}, we give some threshold that guarantee that $\mathcal{R}_k(G)$ is connected and has linear diameter for some "minor sparse classes"\footnote{For a formal definition, we refer the reader to Section~\ref{sec:minorUB}.}. In particular, we prove that $\mathcal{R}_k(G)$ is connected and has linear diameter for $K_{\ell}$-minor free graphs as long as $k \ge \Gamma(G)+O(\ell \sqrt{\log \ell})$. In the particular case of planar graphs, it actually holds as long as $k \ge \Gamma(G)+3$. The proof is algorithmic, and provides linear transformations in polynomial time. We know that there exist planar graphs for which $k \ge \Gamma(G)+2$ is necessary~\cite{SuzukiMN16}. We conjecture the following:

\begin{conjecture}\label{conjecture-planar}
For every planar graph $G$, $\mathcal{R}_{\Gamma(G)+2}(G)$ is connected.
\end{conjecture}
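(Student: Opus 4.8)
We sketch a possible route towards Conjecture~\ref{conjecture-planar}; since the statement is still open, the plan necessarily bottoms out at a delicate planar case analysis.

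\textbf{Reduction to minimal dominating sets.} From any dominating set $D$ with $|D|\le\Gamma(G)+2$ one reaches, by repeatedly deleting a redundant vertex, a minimal dominating set $D'\subseteq D$ inside $\mathcal{R}_{\Gamma(G)+2}(G)$: every intermediate set is dominating and no larger than $D$. So it is enough to fix one canonical minimal dominating set $D_\canonical$ and to transform every minimal dominating set $D$ --- which has size at most $\Gamma(G)$ --- into $D_\canonical$ while never exceeding $\Gamma(G)+2$ tokens, i.e. with two spare tokens permanently available. This is where the whole difficulty concentrates: with three spare tokens the analogous statement is already proved in this paper ($k\ge\Gamma(G)+3$), so the task is to shave one spare token by using planarity more tightly.

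\textbf{Canonical set and insertion scheme.} Planar graphs are $5$-degenerate and $4$-colourable, and the plan is to choose $D_\canonical$ together with an ordering $v_1,\dots,v_m$ of its vertices that exposes this structure --- for instance, build $D_\canonical$ greedily along a degeneracy ordering of $G$ so that each $v_i$ privately dominates some vertex that none of $v_1,\dots,v_{i-1}$ dominates. Then process $v_1,\dots,v_m$ in order, maintaining the invariant that after step $i$ the current set contains $\{v_1,\dots,v_i\}$ and still dominates $G$. Step $i$ adds $v_i$ (one spare used) and then deletes some token of $D\setminus\{v_1,\dots,v_i\}$ that has become redundant; when such a token exists, step $i$ costs only one spare, and at the end the current set is $D_\canonical$ together with some leftover vertices of $D$, which we remove one by one.

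\textbf{Main obstacle.} The hard case is when, after inserting $v_i$, no old token is redundant: the vertices still depending on a unique dominator in $D$ form a local obstruction that must be dismantled by rerouting domination of a bounded region through a few auxiliary vertices, spending the \emph{second} spare token. One then has to prove that, for planar $G$, two spares always suffice for this local exchange --- the natural tools being a vertex of degree at most $5$, a separating triangle, or the outer face. The most plausible route is a minimal-counterexample argument: take a planar $G$ minimising $(|V(G)|,|E(G)|)$ with $\mathcal{R}_{\Gamma(G)+2}(G)$ disconnected, and contradict it at a low-degree vertex, perhaps backed by a discharging argument showing the bad local configurations cannot all coexist. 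Establishing that two spare tokens genuinely suffice in every such configuration --- rather than three --- is exactly the obstruction that keeps the statement a conjecture.
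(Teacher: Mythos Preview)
The statement you are addressing is \emph{Conjecture}~\ref{conjecture-planar}, and the paper does not prove it: it is explicitly left as an open problem. So there is no ``paper's own proof'' to compare your proposal against. You correctly flag this yourself (``the statement is still open'', ``the obstruction that keeps the statement a conjecture''), so your write-up is honest about its status.

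As a proof, however, the proposal has a genuine gap exactly where you identify it: the ``main obstacle'' paragraph is a wish list, not an argument. Nothing in the reduction to minimal dominating sets, the choice of a canonical $D_\canonical$, or the degeneracy-ordered insertion scheme actually uses planarity beyond $5$-degeneracy, and $5$-degeneracy alone is not enough (the paper's technique already needs $4$-minor-sparseness to get three spare tokens, and degeneracy is a weaker hypothesis). The suggestion to run a minimal-counterexample/discharging argument at a low-degree vertex is plausible in spirit, but you give no configuration list, no reduction, and no discharging rules; in particular you would need to control how $\Gamma(G)$ changes under the local reductions, which is notoriously badly behaved. Until that local exchange lemma --- ``two spare tokens always suffice to free one old token after inserting $v_i$'' --- is stated precisely and proved, what you have is a strategy outline, not a proof, which matches the paper's own position that the conjecture is open.
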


For $K_\ell$-minor free graphs, the gap between the lower and upper bound is not completely closed since the only lower bound we know is $\Gamma(G)+\ell-4$, which is the lower bound for graphs of treewidth at most $\ell-2$ which will be discussed in the next paragraph (graphs of treewidth at most $\ell-2$ are $K_\ell$-minor free). Our argument for  $K_\ell$-minor free graphs is based on their average degree, and then we cannot improve the term $\Gamma(G)+O(\ell \sqrt{\log \ell})$ with our proof technique.

Finally, in Section~\ref{sec:twUB} we give a sharper upper bound for bounded treewidth graphs. We prove that $\mathcal{R}_k(G)$ is connected for $k=\Gamma(G)+tw(G)+1$, and has linear diameter. Again our results are algorithmic as long as the tree decomposition is given. Since a tree-decomposition of width $r$ can be found in time $2^{O(k^3)} \cdot n$~\cite{Bodlaender96}, our results provide an FPT algorithm parameterized by the treewidth that outputs a linear transformation between any two dominating sets as long as $k \ge \Gamma(G)+tw(G)+1$. 

We claim that this bound is tight up to an additive constant factor. Mynhardt et al.~\cite{Mynhardt19} constructed an infinite family of graphs $G_{\ell,r}$ (with $\ell \ge 3$ and $1 \le r \le \ell-1$) for which $2\Gamma(G)-1$ tokens are necessary to guarantee the connectivity of the reconfiguration graph. Let us describe their construction when $r=\ell-1$. The graph $G_{\ell, \ell-1}$ contains $\ell-1$ cliques $C_1,C_2, \ldots, C_{\ell-1}$ called {\em inner cliques}, each of size $\ell$. We denote by $c^j_i$ the $j$-th vertex of the clique $C_i$. We then add a new clique $C_0$ of size $\ell$, called the {\em outer clique} and we add a new vertex $u_0$ adjacent to all the vertices of $C_0$ (hence, $C_0$ can be seen as a clique of size $\ell+1$). For every $1 \le i \le \ell-1$ and for every $1 \le j \le \ell$, we add an edge between $c^j_i$ and $c^j_0$. This completes the construction of $G_{\ell,\ell-1}$ (see Figure~\ref{fig:lower-bound} for an example). Mynhardt et al.~\cite{Mynhardt19} showed that $\Gamma(G_{\ell,\ell-1})=\ell$.  

They moreover show that $\mathcal{R}_{2\ell-2}(G_{\ell,\ell-1})$ is not connected. One can prove easily (see Section~\ref{sec:prelim}) that $G_{\ell,\ell-1}$ has treewidth $\ell$. So $\mathcal{R}_{\Gamma(G)+tw(G)-2}$ is not necessarily connected. So our function of the treewidth is tight up to an additive constant factor. 
The pathwidth of $G_{\ell,\ell-1}$ is at most $2\ell-1$. However, it is not clear if and how we can obtain a better upper bound for bounded pathwidth graphs. To sum up $\mathcal{R}_k(G)$ is not necessarily connected if $k < \Gamma(G)+pw(G)/2+O(1)$ and is connected if $k>\Gamma(G)+pw(G)+1$. We were not able to close this gap and left it as an open problem.

\begin{figure}[bt]
    \centering
    \includegraphics[scale=.6]{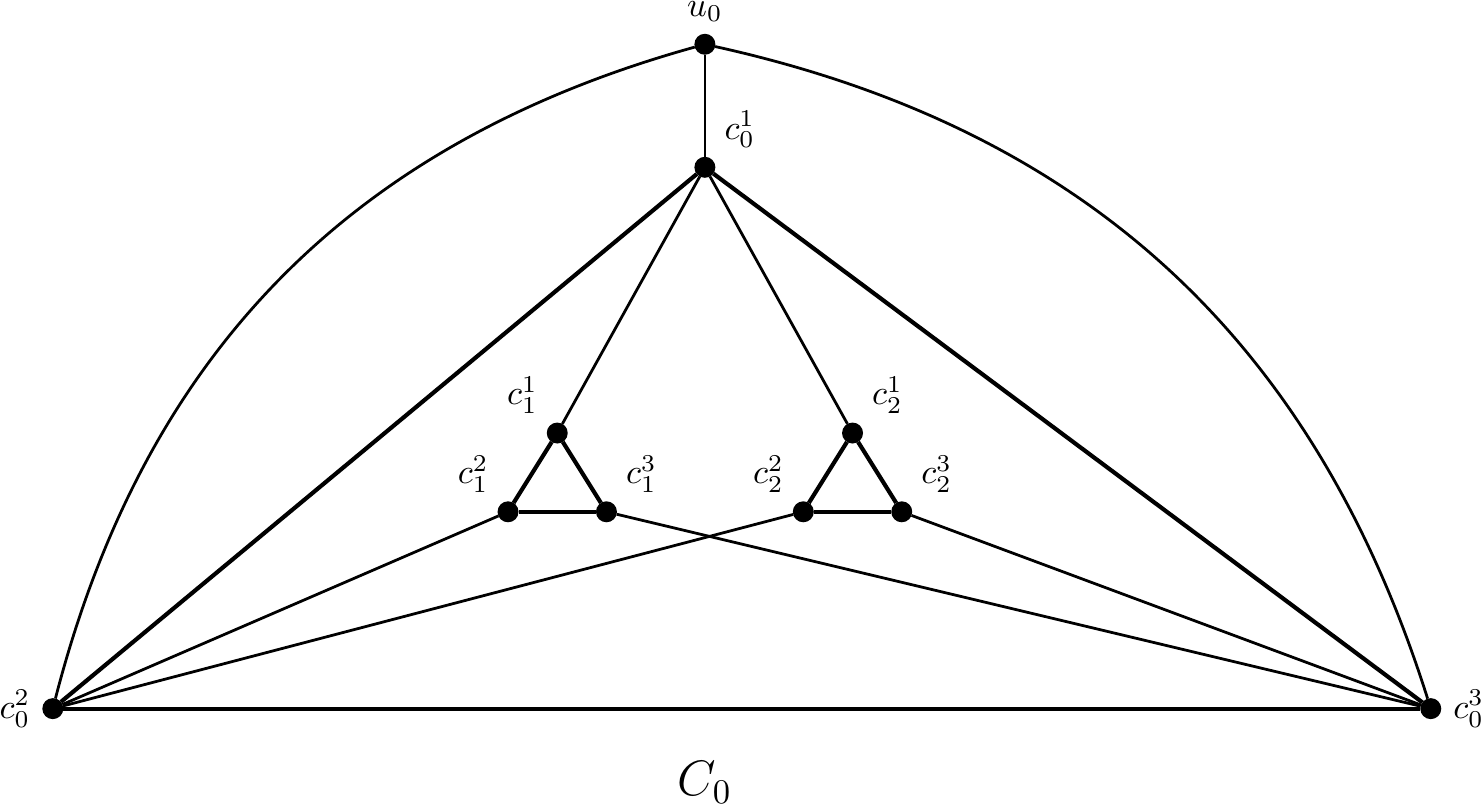}
    \caption{The graph $G_{3,2}$}
    \label{fig:lower-bound}
\end{figure}

\section{Preliminaries}\label{sec:prelim}
All along the paper, every graph we consider is finite and simple. Let $G=(V,E)$ be a graph. When there is no ambiguity on the graph $G$, $V$ denotes the vertex set of $G$, $E$ its set of edges, $n$ its order and $m$ its size. 

Given a subset of vertices $S \subseteq V$, we denote by $G[S]$ the subgraph of $G$ induced by $S$. More precisely, the vertex set of $G[S]$ is $S$, and its edge set is the subset of edges of $G$ with both endpoints in $S$. 

An \emph{edge contraction} is an operation which removes an edge from a graph while simultaneously merging the two vertices it used to connect (the resulting new vertices is adjacent to a vertex $v$ if and only if at least one endpoint of the edge was incident to $v$). A graph $H$ is a \emph{minor} of $G$ if a graph isomorphic to $H$ can be obtained from $G$ by contracting some edges, deleting some edges, and deleting some isolated vertices.

Given a vertex $v \in V$, $N(v)$ denotes the \emph{neighborhood} of $v$, i.e. the set $\{ u \in V ~ | ~ uv \in E \}$. We denote by $N[v]$ the {\em closed neighborhood} of $v$, that is the set $N(v) \cup \{v\}$.

A \emph{dominating set} $D$ of $G$ is a subset of $V$ such that for any $v\in V$, $v\in D$ or there exists $u\in D$ such that $uv\in E$. An \emph{inclusion-wise minimal dominating set} of $G$ is a dominating set $D$ of $G$ such that for any $v\in D$, $D\setminus v$ is not a dominating set of $G$. A \emph{minimum dominating set} of $G$ is a dominating set $D$ of $G$ such that $|D|$ is minimal with this property. The maximum size of a minimal dominating set of $G$ is denoted by $\Gamma(G)$. We say that a set $X\subseteq V$ \emph{dominates} another set $Y\subseteq V$ if for any $v\in Y$, there exists $u\in X$ such that $uv\in E$.

An {\em independent set} (or {\em stable set}) of $G$ is a subset $S \subseteq V$ of pairwise non-adjacent vertices, i.e. for any pair of vertices $u,v \in S$, $uv \not\in E$. An {\em inclusion-wise maximal independent set} $S$ is an independent set such that for any $v \in V \setminus S$, there exists $u \in N(v)$ such that $u \in S$. A {\em maximum independent set} of $G$ is an independent set $S$ such that $|S|$ is maximal. We denote by $\alpha(G)$ the \emph{independence number} of $G$, that is the size of a maximum independent set. Computing a maximum independent set of a given graph $G$ is a classical NP-complete problem, while computing a maximal one can trivially be done in linear time by a greedy algorithm. Moreover, given an independent set $S'$ which is not maximal, one can greedily complete into a maximal independent set $S$ such that $S' \subseteq S$. In particular, if there exist two vertices $u$ and $v$ such that $uv \not\in E$, then there exists a maximal independent set of $G$ which contains both $u$ and $v$. Obviously, this is also true when $S'$ is reduced to a single vertex. We will use this fact in the proof of Theorem~\ref{thm:upper-bound}, as well as the following well-known observation:

\begin{observation} \label{obs:maxIS-minDS}
Let $G=(V,E)$ be a graph, and $S \subseteq V$ be an inclusion-wise maximal independent set of $G$. Then, $S$ is an inclusion-wise minimal dominating set.
\end{observation}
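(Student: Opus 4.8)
The plan is to verify directly the two properties in the definition of an inclusion-wise minimal dominating set: first that $S$ is a dominating set of $G$, and then that deleting any single vertex from $S$ destroys this property.

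First I would check that $S$ dominates $G$. Let $v \in V$ be arbitrary. If $v \in S$, then $v$ is dominated by itself. Otherwise $v \in V \setminus S$, and since $S$ is an \emph{inclusion-wise maximal} independent set, the set $S \cup \{v\}$ is not independent; as $S$ itself is independent, the only possible obstruction is an edge between $v$ and some $u \in S$, so $v$ has a neighbor in $S$. In both cases $v$ is dominated, hence $S$ is a dominating set.

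Next I would establish minimality. Fix any $v \in S$; it suffices to exhibit a vertex of $G$ that $S \setminus \{v\}$ fails to dominate, and the natural candidate is $v$ itself. Since $S$ is independent, $v$ has no neighbor in $S$, and in particular no neighbor in $S \setminus \{v\}$; moreover $v \notin S \setminus \{v\}$. Thus $v$ is neither contained in $S \setminus \{v\}$ nor adjacent to it, so $S \setminus \{v\}$ is not a dominating set. As $v \in S$ was arbitrary, no set obtained from $S$ by removing one vertex is dominating, which is exactly inclusion-wise minimality. Combining the two parts gives the statement.

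There is essentially no obstacle here: this is a classical fact, and the only point requiring a little care is to invoke the two definitions in the correct direction --- maximality of the independent set gives that every outside vertex has a neighbor in $S$ (used for domination), while independence gives that no inside vertex has a neighbor in $S$ (used for minimality, since each $v \in S$ is its own private vertex).
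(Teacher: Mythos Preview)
Your proof is correct and follows essentially the same approach as the paper: first show domination using maximality (every vertex outside $S$ has a neighbor in $S$), then show minimality using independence (each $v\in S$ is its own private vertex for $S$). The arguments are identical in structure and content.
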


\begin{proof}
Let $u \in V$ be a vertex. If $u \in S$, $u$ is dominated by itself. Otherwise, there exists $v \in N(u) \cap S$ since $S$ is maximal. Hence, $u$ is dominated by $v$. Moreover, by definition of an independent set, we have $N(S\setminus u)$ does not contain $u$ for every vertex $u \in S$. Therefore, $u$ is not dominated in $S \setminus \{u\}$ and thus $S$ is a minimal dominating set of $G$. 
\end{proof}

Note that Observation~\ref{obs:maxIS-minDS} implies that any inclusion-wise maximal independent set $S$ of $G$ satisfies $|S| \le \alpha(G) \le \Gamma(G)$.   
In the remaining, we often refer to inclusion-wise minimal dominating sets (respectively inclusion-wise maximal independent sets) as \emph{minimal dominating sets} (respectively \emph{maximal independent sets}) by abuse of language.

A \emph{tree} is a connected graph that contains no cycle. Given a graph $G=(V,E)$, a \emph{tree decomposition} of $G$ is a pair $(X,T)$ where $X$ is a set of subsets of $V$ called \emph{bags} and $T$ is a tree whose vertices are the bags of $X$, and that satisfies:

\begin{itemize}
    \item For any vertex $v\in V$, $v$ belongs to at least one bag of $X$
    \item For any edge $uv\in E$, there exists a bag that contains both $u$ and $v$
    \item For any vertex $v\in V$, the set of bags containing $v$ forms a subtree of $T$.
\end{itemize}

The minimum, over all the possible tree decompositions of $G$, of the maximum size of a bag, to which we subtract $1$, is called the \emph{treewidth} of $G$ and is denoted by $tw(G)$.  A \emph{path decomposition} is a tree decomposition such that $T$ is a path. The minimum, over all the path decompositions of $G$, of the maximum size of a bag minus $1$ is the \emph{pathwidth} of $G$, denoted by $pw(G)$.

\paragraph{Pathwidth and treewidth of $G_{\ell,\ell-1}$.}
In the introduction, we claimed that $G_{\ell,\ell-1}$ has treewidth $\ell$ and pathwidth at most $2\ell-1$. For completeness, we prove it here.

\begin{claim} \label{claim:tw-lower-bound}
The graph $G_{\ell,\ell-1}$ has treewidth $\ell$.
\end{claim}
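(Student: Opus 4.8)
To prove that $G_{\ell,\ell-1}$ has treewidth exactly $\ell$, I would establish the two inequalities separately. For the upper bound $tw(G_{\ell,\ell-1}) \le \ell$, the plan is to exhibit an explicit tree decomposition of width $\ell$, i.e.\ with all bags of size at most $\ell+1$. The natural candidate is a ``path-like'' decomposition: take a central bag $B_0 = C_0 \cup \{u_0\}$ (size $\ell+1$, which handles the outer clique together with $u_0$), and for each inner clique $C_i$ with $1 \le i \le \ell-1$, attach a pendant path of bags to $B_0$ that sweeps through the edges between $C_i$ and $C_0$. Concretely, the edges between $C_i$ and $C_0$ form a perfect matching $\{c^j_i c^j_0 : 1 \le j \le \ell\}$, so for fixed $i$ one can use bags of the form $\{c^1_i,\dots,c^j_i\} \cup \{c^j_0,\dots,c^\ell_0\}$, sliding $j$ from $1$ to $\ell$; each such bag has size $\ell+1$, each contains an edge of the matching, and consecutive bags overlap appropriately, while the bag $\{c^1_i,\dots,c^\ell_i\}$ at the end covers the clique $C_i$. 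I would then check the three tree-decomposition axioms: every vertex appears (the $c^j_i$'s appear in the $i$-th branch, $C_0 \cup \{u_0\}$ appears in $B_0$), every edge is covered (edges inside $C_0$ and incident to $u_0$ in $B_0$; edges inside $C_i$ in the terminal bag of branch $i$; the matching edges in the sweep bags), and the bags containing any fixed vertex form a subtree (the only vertices shared between branches are those of $C_0$, and each $c^j_0$ appears contiguously along $B_0$ and the tail of each branch — one must order the sweep so this holds, e.g.\ by having $c^j_0$ leave the bag only after index $j$ is passed, keeping its occurrence interval connected).

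For the lower bound $tw(G_{\ell,\ell-1}) \ge \ell$, it suffices to exhibit a clique of size $\ell+1$ or, more robustly, a $K_{\ell+1}$-minor, since treewidth is minor-monotone and $tw(K_{\ell+1}) = \ell$. The clique $C_0 \cup \{u_0\}$ already has size $\ell+1$ and is a clique in $G_{\ell,\ell-1}$ (all of $C_0$ is mutually adjacent and $u_0$ is adjacent to all of $C_0$), so $G_{\ell,\ell-1}$ contains $K_{\ell+1}$ as a subgraph, giving $tw(G_{\ell,\ell-1}) \ge \ell$ immediately. Combining the two bounds yields $tw(G_{\ell,\ell-1}) = \ell$.

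The main obstacle is purely bookkeeping in the upper-bound construction: one must choose the sweep bags for each branch so that every vertex's occurrence set is a connected subtree of $T$, and in particular so that the vertices of $C_0$ (which are the ``articulation'' vertices glueing the branches to $B_0$) appear consistently. The cleanest way around this is to not let all of $C_0$ persist down every branch, but rather to introduce the $c^j_i$ vertices and retire the $c^j_0$ vertices in a coordinated linear order within branch $i$, so that at every bag the size stays at $\ell+1$ and each $c^j_0$'s occurrences form an interval starting at $B_0$; I would verify this connectivity condition explicitly, as it is the only place where the argument can go wrong. Everything else (vertex coverage, edge coverage) is a routine check against the definition of $G_{\ell,\ell-1}$.
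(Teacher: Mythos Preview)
Your proposal is correct and follows essentially the same approach as the paper: both use the clique $C_0\cup\{u_0\}$ of size $\ell+1$ for the lower bound, and both construct a tree decomposition with a central bag $B_0=C_0\cup\{u_0\}$ and, for each inner clique $C_i$, a pendant path of bags of the form $\{c_i^1,\dots,c_i^j\}\cup\{c_0^j,\dots,c_0^\ell\}$ that sweeps the matching between $C_i$ and $C_0$. The verifications you outline (edge coverage, connectivity of each vertex's bag set, bag size $\ell+1$) are exactly those carried out in the paper.
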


\begin{proofclaim}
First, observe that $tw(G_{\ell,\ell-1}) \ge \ell$ since $G[C_0 \cup \{u_0\}]$ is a clique of size $\ell+1$. 

Let us now give a tree decomposition of $G_{\ell,\ell-1}$ of width $\ell$. We first create a ``central'' bag $B_0$ containing all the vertices of $C_0$ and the vertex $u_0$. For each inner clique $C_i$ with $1 \le i \le \ell-1$, we attach to $B_0$ a path $B_i^1B_i^2 \cdots B_i^{\ell}$ where $B_i^j$ contains the vertices $(C_0 \setminus \bigcup_{k=0}^{j-1} c_0^k) \cup \bigcup_{k=1}^j c_i^k$ (see Figure~\ref{fig:lower-bound-tw} for an example). Observe that for any $1 \le i \le \ell-1$, the bag $B_i^\ell$ contains all the vertices of $C_i$. And the bag $B_i^j$ contains both $c_0^j$ and $c_i^j$. Hence, each edge is contained in at least one bag. For every $1 \le j \le \ell$, the vertex $c_0^j$ is contained in the bags $B_0 \cup \bigcup_{i=1}^{\ell-1} \bigcup_{k=1}^j B_i^k$. And for every $1 \le i \le \ell-1$ and every $1 \le j \le \ell$, the vertex $c_i^j$ is contained in $B_i^1, B_i^2, \ldots, B_i^j$. It follows that for every vertex $u \in V(G_{\ell,\ell-1})$ the set of bag containing $u$ induces a connected subtree. Finally, one can easily check that each bag contains exactly $\ell+1$ vertices. Hence, this decomposition indeed is a tree decomposition of $G_{\ell,\ell-1}$ of width $\ell$ and the conclusions follows.
\end{proofclaim}

\begin{figure}[bt]
    \centering
    \includegraphics[scale=0.9]{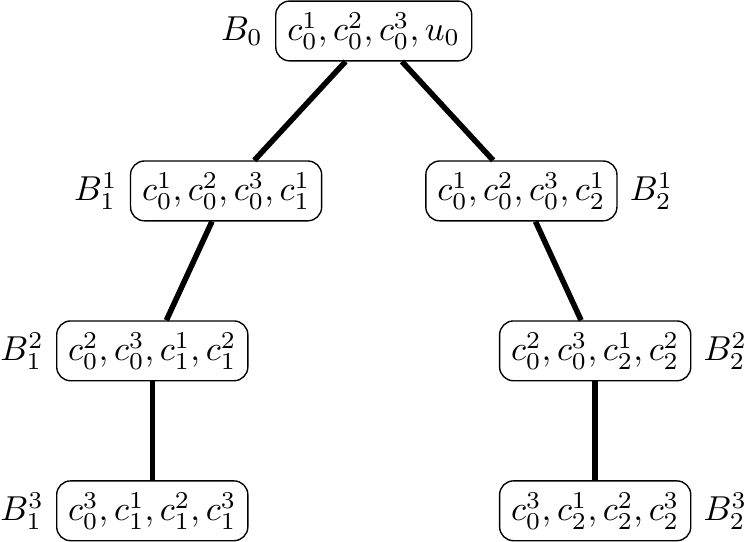}
    \caption{Tree decomposition of $G_{3,2}$ of width $tw(G_{3,2})$.}
    \label{fig:lower-bound-tw}
\end{figure}

\begin{claim}\label{claim:pw-lower-bound}
The pathwidth of $G_{\ell,\ell-1}$ is at most $2\ell-1$.
\end{claim}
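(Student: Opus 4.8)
The plan is to exhibit an explicit path decomposition of $G_{\ell,\ell-1}$ in which every bag has size at most $2\ell$, which gives pathwidth at most $2\ell-1$. The idea is to keep the entire outer structure $C_0 \cup \{u_0\}$ resident in every bag (that costs $\ell+1$ vertices), and then sweep through the inner cliques one at a time; while processing inner clique $C_i$, we only need to add the $\ell$ vertices of $C_i$ to the bag, since all edges incident to $C_i$ go either inside $C_i$ or to $C_0$ (each $c_i^j$ is adjacent only to $C_i \setminus \{c_i^j\}$ and to $c_0^j$). So the natural decomposition is a path $B_1 - B_2 - \cdots - B_{\ell-1}$ where $B_i = (C_0 \cup \{u_0\}) \cup C_i$, and each such bag has exactly $(\ell+1) + \ell = 2\ell+1$ vertices, giving width $2\ell$; to shave this down to $2\ell-1$ one can drop $u_0$ from all but one bag (placing it, say, only in $B_1$ together with $C_0$), since $u_0$ is adjacent only to vertices of $C_0$ and to nothing else, so it suffices for $u_0$ to appear in a single bag alongside all of $C_0$.

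Concretely, I would define the path decomposition with bags $B_0, B_1, \ldots, B_{\ell-1}$ where $B_0 = C_0 \cup \{u_0\}$ (of size $\ell+1 \le 2\ell$) and $B_i = C_0 \cup C_i$ for $1 \le i \le \ell-1$ (each of size $2\ell$), arranged as the path $B_0 - B_1 - B_2 - \cdots - B_{\ell-1}$. Then the verification is routine: (i) every vertex appears — $u_0 \in B_0$, every $c_0^j \in B_0$ and in every $B_i$, and every $c_i^j \in B_i$; (ii) every edge is covered — edges inside $C_0 \cup \{u_0\}$ lie in $B_0$, edges inside $C_i$ lie in $B_i$, and the edge $c_i^j c_0^j$ lies in $B_i$; (iii) the bags containing any fixed vertex form a subpath — $u_0$ appears only in $B_0$, each $c_i^j$ only in $B_i$, and each $c_0^j$ appears in $B_0, B_1, \ldots, B_{\ell-1}$, i.e.\ the whole path, which is connected. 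Since the maximum bag size is $2\ell$, the pathwidth is at most $2\ell-1$.

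There is essentially no obstacle here; the only point that requires a moment's care is confirming that no inner clique needs to "see" any vertex outside $C_0 \cup C_i \cup \{u_0\}$ — in particular that distinct inner cliques $C_i$ and $C_{i'}$ share no edge and the vertices $c_i^j$ have no neighbours among the $c_0^k$ for $k \neq j$ — which is immediate from the construction of $G_{\ell,\ell-1}$. One could alternatively reuse and slightly coarsen the tree decomposition from Claim~\ref{claim:tw-lower-bound} by contracting each attached path $B_i^1 B_i^2 \cdots B_i^\ell$ into the single bag $C_0 \cup C_i$ and then stringing these bags into a path through $B_0$, but writing the path decomposition directly as above is cleaner.

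\begin{proofclaim}
Consider the path decomposition of $G_{\ell,\ell-1}$ whose bags are $B_0, B_1, \ldots, B_{\ell-1}$, arranged along a path in this order, where
\[
B_0 = C_0 \cup \{u_0\} \qquad \text{and} \qquad B_i = C_0 \cup C_i \quad \text{for } 1 \le i \le \ell-1.
\]
We check the three defining properties. Every vertex of $G_{\ell,\ell-1}$ appears in some bag: $u_0 \in B_0$; each vertex of $C_0$ appears in $B_0$ (and in every $B_i$); and each vertex $c_i^j$ of an inner clique $C_i$ appears in $B_i$. Every edge is contained in some bag: the edges of $G[C_0 \cup \{u_0\}]$ are all contained in $B_0$; the edges of $G[C_i]$ are all contained in $B_i$; and for every $1 \le i \le \ell-1$ and $1 \le j \le \ell$, the edge $c_i^j c_0^j$ is contained in $B_i$ since $c_0^j \in C_0 \subseteq B_i$ and $c_i^j \in C_i \subseteq B_i$. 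These are all the edges of $G_{\ell,\ell-1}$. Finally, for each vertex the set of bags containing it is a contiguous subpath: $u_0$ appears only in $B_0$; each $c_i^j$ appears only in $B_i$; and each $c_0^j$ appears in all of $B_0, B_1, \ldots, B_{\ell-1}$, which is the whole path.

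Hence $(\{B_0,\ldots,B_{\ell-1}\},\, B_0 - B_1 - \cdots - B_{\ell-1})$ is a path decomposition of $G_{\ell,\ell-1}$. Its maximum bag size is $\max\{\, |B_0|,\, |B_1|,\, \ldots,\, |B_{\ell-1}| \,\} = \max\{\, \ell+1,\, 2\ell \,\} = 2\ell$ (using $\ell \ge 3$). Therefore $pw(G_{\ell,\ell-1}) \le 2\ell - 1$.
\end{proofclaim}
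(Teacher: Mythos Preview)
Your proof is correct and follows exactly the same approach as the paper: you construct the identical path decomposition $B_0 = C_0 \cup \{u_0\}$, $B_i = C_0 \cup C_i$ for $1 \le i \le \ell-1$, arranged as a path, with maximum bag size $2\ell$. The paper's proof is terser (it simply asserts that this is a valid path decomposition of width $2\ell-1$), while you spell out the verification, but the construction is the same.
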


\begin{proofclaim}
We give a path decomposition of width at most $2\ell-1$ of $G_{\ell,\ell-1}$. We first create a bag $B_0$ which contains $C_0 \cup \{u_0\}$. For every $1 \le i \le \ell-1$, we create a bag $B_i = C_0 \cup C_i$ such that $B_1B_2 \ldots B_{\ell-1}$ induces a path. One can easily check that it is a path decomposition of width $2\ell-1$ of $G_{\ell,\ell-1}$.
\end{proofclaim}

\section{General upper bound}\label{sec:UB}
Let $G$ be a graph. All along the section $k=\Gamma(G)+\alpha(G)-1$.
Haas and Seyffarth showed that $\mathcal{R}_k(G)$ is connected~\cite{Haas17}. However, they do not explicit the diameter of the reconfiguration graph and their induction based proof does not give a linear diameter. We propose a new proof of the same result that moreover implies that the reconfiguration graph has linear diameter. Note that our proof is constructive and provides an algorithm that construct a path between two given dominating sets of size at most $k$ of $G$.

\begin{observation}\label{obs:IS-DS}
Let $D$ be a minimal dominating set of $G$, and let $S$ be a maximal independent set of $G$ such that $D \cap S \neq \emptyset$. Then, there exists a {\sf TAR}(k)-reconfiguration sequence between $D$ and $S$ of length at most $|D|+\alpha(G)-2$.
\end{observation}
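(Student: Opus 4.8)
The plan is to use the naive two-phase transformation and check that the intersection hypothesis makes it fit under the threshold $k$. Fix a vertex $v \in D \cap S$ (which exists by assumption). In the first phase, add the vertices of $S \setminus D$ to $D$ one at a time, in an arbitrary order, reaching the set $D \cup S$. In the second phase, remove the vertices of $D \setminus S$ one at a time, reaching $S$. This is a sequence of additions and removals, so it only remains to verify that (i) every intermediate set is a dominating set, (ii) no intermediate set has size larger than $k$, and (iii) the total number of steps is at most $|D| + \alpha(G) - 2$.

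For (i): during the first phase every intermediate set is of the form $D \cup S'$ with $S' \subseteq S \setminus D$, hence a superset of $D$, hence a dominating set; during the second phase every intermediate set is of the form $S \cup D'$ with $D' \subseteq D \setminus S$, hence a superset of $S$, and $S$ is a dominating set by Observation~\ref{obs:maxIS-minDS}. So every set in the sequence dominates $G$.

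For (ii): in the first phase the sizes increase monotonically from $|D|$ to $|D \cup S|$, and in the second phase they decrease monotonically from $|D \cup S|$ to $|S|$, so the maximum size attained over the whole sequence is exactly $|D \cup S|$. Here is the only place the hypothesis $D \cap S \neq \emptyset$ is used: since $D$ is a minimal dominating set we have $|D| \le \Gamma(G)$, since $S$ is a maximal independent set we have $|S| \le \alpha(G)$ (again by Observation~\ref{obs:maxIS-minDS}), and since $|D \cap S| \ge 1$ we get
\[
|D \cup S| = |D| + |S| - |D \cap S| \le \Gamma(G) + \alpha(G) - 1 = k .
\]
Thus the threshold is respected throughout.

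For (iii): the number of steps is $|S \setminus D| + |D \setminus S| = (|S| - |D \cap S|) + (|D| - |D \cap S|) = |D| + |S| - 2|D \cap S|$. Using $|S| \le \alpha(G)$ and $|D \cap S| \ge 1$, this is at most $|D| + \alpha(G) - 2$, as required. I do not expect any genuine obstacle here; the single substantive point is the saving of one token in the size bound coming from the common vertex $v$, and everything else is bookkeeping.
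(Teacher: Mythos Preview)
Your proof is correct and follows essentially the same approach as the paper: add the vertices of $S \setminus D$ one by one, then delete those of $D \setminus S$, using the common vertex to save one unit both in the size bound and in the step count. The paper's argument is the same two-phase transformation with the same bookkeeping, only written more tersely.
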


\begin{proof}
Recall that since $S$ is a maximal independent set, $|S| \le \alpha(G) \le \Gamma(G)$. We first add to $D$ each vertex in $S \setminus D$ one by one. Note that there are at most $\alpha(G)-1$ such vertices. We thus obtain the set $D' = D \cup S$. We then remove one by one each vertex in $D \setminus S$. There are at most $|D|-1$ such vertices since $S \cap D \neq \emptyset$. Each intermediate solution is indeed a dominating set since it either contains $D$ or $S$ which are both dominating sets. Moreover, each solution is of size at most $|D'| \le |D|+|S|-1 \le k$. 
\end{proof}

\begin{theorem} \label{thm:upper-bound}
Let $G=(V,E$) be a graph on $n$ vertices. If $k  = \Gamma(G) + \alpha(G) - 1$ then $\mathcal{R}_k(G)$ has diameter at most $10n$.
\end{theorem}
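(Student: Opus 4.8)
The plan is to show that any dominating set $D$ of size at most $k$ can be reconfigured, via a {\sf TAR}($k$)-sequence of length $O(n)$, to a fixed "canonical" dominating set; then given two dominating sets $D_\source, D_\target$ we concatenate the sequence from $D_\source$ to the canonical one with the reverse of the sequence from $D_\target$ to it, giving diameter $O(n)$. The natural candidate for the canonical object is a fixed maximal independent set $S_0$ of $G$ (which by Observation~\ref{obs:maxIS-minDS} is a minimal dominating set, and has size at most $\alpha(G)$). The key bridging tool is Observation~\ref{obs:IS-DS}: once we have a minimal dominating set $D$ that shares a vertex with some maximal independent set $S$, we can get from $D$ to $S$ in at most $|D|+\alpha(G)-2 \le \Gamma(G)+\alpha(G)-2 < k$ steps. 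And any two maximal independent sets can be linked through $S_0$ provided we can move between any two maximal independent sets cheaply — but more simply, we should aim to route everything to a single maximal independent set.

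First I would reduce an arbitrary dominating set $D$ (of size $\le k$) to a minimal one: greedily delete vertices whose removal keeps it dominating, which only decreases the size, stays within threshold $k$, and costs at most $|D| \le k \le 2n$ removals. So assume $D$ is a minimal dominating set. Next, the crux: I want to reach a maximal independent set. Pick any vertex $v \in D$ and extend $\{v\}$ to a maximal independent set $S_v$ containing $v$ (possible greedily, as the preliminaries note). Now $D \cap S_v \ni v \neq \emptyset$, so Observation~\ref{obs:IS-DS} gives a {\sf TAR}($k$)-sequence from $D$ to $S_v$ of length at most $|D| + \alpha(G) - 2 \le k - 1$. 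The remaining issue is that $S_v$ depends on $v$ (hence on $D$); I need all these maximal independent sets to be mutually reachable and, ultimately, reachable from the fixed $S_0$. For this I would show: any two maximal independent sets $S, S'$ are within distance $O(n)$ in $\mathcal{R}_k(G)$. One clean route: since $S'$ is a dominating set, $S \cup S'$ is dominating and has size $\le 2\alpha(G) \le 2\Gamma(G) \le k+1$ — that is one too many in the worst case, so I need to be slightly more careful, e.g. add the vertices of $S'$ to $S$ one at a time, and at each stage, before the set gets too large, delete a now-redundant vertex of $S \setminus S'$; alternatively observe that $|S \cup S'| \le \alpha(G) + \alpha(G) \le \alpha(G)+\Gamma(G) - 1 = k$ holds as soon as $\alpha(G) < \Gamma(G)$, and handle the easy corner case $\alpha(G) = \Gamma(G)$ separately (there both $S$ and $S'$ have size exactly $\Gamma(G)$ and one can interleave additions and deletions keeping size $\le \Gamma(G)+1 \le k$). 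Either way this costs $O(\alpha(G)) = O(n)$ steps.

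Assembling: $D_\source \to$ (minimal dom. set) $\to S_{v}$ (a max. indep. set) $\to S_0 \to S_{w}$ (max. indep. set from the $D_\target$ side) $\to$ (minimal dom. set) $\to D_\target$. Each leg costs at most roughly $2n$ steps, and there are a constant number of legs, giving a bound of the form $c \cdot n$; tracking the constants (greedy minimization $\le 2n$ each for the two outer legs, each Observation~\ref{obs:IS-DS} leg $\le k - 1 \le 2n$, the $S_v \to S_0$ and $S_0 \to S_w$ legs $\le 2\alpha(G) \le 2n$ each) should comfortably fit under $10n$. The main obstacle is the maximal-independent-set-to-maximal-independent-set step when $\alpha(G)$ is close to $\Gamma(G)$: one must verify the union never exceeds $k$, or replace the crude "take the union" argument by an interleaved add/remove schedule, and check that every intermediate set genuinely remains dominating (which it does, as each contains one of the two dominating sets throughout, exactly as in the proof of Observation~\ref{obs:IS-DS}). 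A secondary bookkeeping point is to make sure the canonical $S_0$ is fixed once and for all independently of the input, so the triangle-inequality argument for the diameter is legitimate.
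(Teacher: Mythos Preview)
Your high-level plan---reduce to minimal dominating sets, then bridge via maximal independent sets using Observation~\ref{obs:IS-DS}---is the same as the paper's. The substantive difference is how the two sides are linked: you route everything through a fixed canonical $S_0$, whereas the paper constructs, directly from $D_1$ and $D_2$, two maximal independent sets $S_1,S_2$ that are \emph{guaranteed to share a vertex}, so Observation~\ref{obs:IS-DS} applies to the $S_1\to S_2$ leg immediately and with linear length.

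That difference is exactly where your argument has a genuine gap. The problematic leg is $S_v\to S_0$ in the case $\alpha(G)=\Gamma(G)$, $S_v\cap S_0=\emptyset$ and $|S_v|=|S_0|=\alpha(G)$, so $|S_v\cup S_0|=k+1$. Your proposed fix, ``interleave additions and deletions keeping size $\le\Gamma(G)+1$'', is precisely Haas and Seyffarth's statement that all independent dominating sets lie in one component of $\mathcal{R}_{\Gamma(G)+1}(G)$---but, as the paper emphasises, their inductive proof does \emph{not} give a linear-length sequence, and your sketch does not supply one either. (Strict add-then-remove alternation can genuinely stall: after adding a single vertex of $S_0$ to $S_v$ there need not be any vertex of $S_v$ whose deletion preserves domination, since each $s\in S_v$ is its own private neighbour and the new vertex may fail to cover the other privates of any $s$ it is adjacent to.) So the step you flag as ``the main obstacle'' is in fact the entire difficulty the theorem is meant to overcome. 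The paper sidesteps it: when $D_1\cap D_2=\emptyset$, attempting a swap $u_i\mapsto v_j$ either succeeds (reducing to the intersecting case) or exhibits a vertex $x$ non-adjacent to all of $D_1\setminus\{u_i\}$ and to $v_j$; one then takes $S_1\supseteq\{x,u_k\}$ and $S_2\supseteq\{x,v_j\}$, so $x\in S_1\cap S_2$ and the middle leg costs at most $2\alpha(G)-2$. A secondary point: even if your $S_v\to S_0$ leg worked, routing through $S_0$ on both sides adds two extra legs, and your tally $2n+2n+2n+2n+2n+2n$ does not fit under $10n$.
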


\begin{proof}
Let $D_1$ and $D_2$ be two dominating sets, both of size at most $k$. Free to remove at most $2\cdot (\Gamma(G) + \alpha(G) - 2)$ vertices in total, one can assume without loss of generality that $D_1$ and $D_2$ are both inclusion-wise minimal dominating sets of $G$. Hence $|D_1| \le \Gamma(G)$ and $|D_2| \le \Gamma(G)$. We outline a path between $D_1$ and $D_2$ in $\mathcal{R}_k(G)$. The next claim deals with the case where $D_1$ and $D_2$ have a non-empty intersection.

\begin{claim}\label{claim:non-empty-intersection}
If $D_1 \cap D_2 \neq \emptyset$ then there exists a reconfiguration sequence from $D_1$ to $D_2$ of length at most $2 \cdot (\alpha(G)+\Gamma(G)-2)$.
\end{claim}

\begin{proofclaim}
Let $x$ be a vertex that belongs to both $D_1$ and $D_2$. One first constructs greedily (and thus in polynomial-time) a maximal independent set $S$ of $G$ which contains $x$ (which is then of size at most $\alpha(G)$). By Observation~\ref{obs:IS-DS}, one can transform $D_1$ into $S$ under the {\sf TAR}($k$) rule. And the length of the reconfiguration sequence is at most $\Gamma(G)+\alpha(G)-2$.
Similarly, there exists a reconfiguration sequence of length at most $\Gamma(G)+\alpha(G)-2$ from $D_2$ to $S$. By combining these two transformations, we obtain a reconfiguration sequence between $D_1$ and $D_2$ of length at most $2 \cdot (\alpha(G)+\Gamma(G)-2)$, as desired.
\end{proofclaim}

In the remaining of this proof, we assume that $D_1 \cap D_2 = \emptyset$ otherwise we can directly conclude by Claim~\ref{claim:non-empty-intersection}. If there exist $u_i \in D_1$ and $v_j \in D_2$ such that the set $D' = (D_1 \setminus \{u_i\}) \cup \{v_j\}$ is a dominating set of $G$, then we can conclude by Claim~\ref{claim:non-empty-intersection} since $D' \cap D_2 \neq \emptyset$ and $D'$ can be obtained from $D_1$ in two steps. Suppose now that $D' = (D_1 \setminus \{u_i\}) \cup \{v_j\}$ is not a dominating set of $G$. This means that $u_i$ is adjacent to a vertex $x$ with no neighbors in $(D_1 \setminus \{u_i\}) \cup \{v_j\}$. Hence, there exists a maximal independent set $S_1$ of $G$ which contains both $x$ and a vertex $u_k \in D_1 \setminus \{u_i\}$. Similarly, there exists a maximal independent set $S_2$ which contains both $x$ and $v_j$. By Observation~\ref{obs:IS-DS}, there exists a reconfiguration sequence of length at most $\Gamma(G)+\alpha(G)-2$ between $S_1$ (respectively $S_2$) and $D_1$ (respectively $D_2$) under the {\sf TAR}($k$) rule. 
Finally, since $S_1$ and $S_2$ intersect, we can again use Observation~\ref{obs:IS-DS} that ensures that there exists a transformation from $S_1$ to $S_2$ of length at most $2\alpha(G)-2$.

Hence, we obtain a {\sf TAR}($k$)-reconfiguration sequence from $D_1$ to $D_2$ of length at most $4 \cdot (\Gamma(G) + \alpha(G)-2) + 2 \cdot (\alpha(G)-1) < 10n$. 
\end{proof}

\section{\texorpdfstring{$H$}{H}-minor free graphs}\label{sec:minorUB}

In this section, we will prove some better bounds on $k$ for minor-free graphs.
We say that a graph is \emph{$d$-minor sparse} if all its bipartite minors have average degree less than $d$. Note that it is equivalent to say that the ratio between the number of edges and the number of vertices of any bipartite minor of $G$ is strictly less than $\frac{d}{2}$.

\begin{lemma}\label{lem:contradiction}
Let $G$ be a $d$-minor sparse graph.
Let $A$ and $B$ be two dominating sets of $G$ such that $|A|=|B|$ and $|B\setminus A|\geq d$. Then, there exists a vertex $a \in A\setminus B$ and a set $S \subset B\setminus A$ with $|S|=d-1$ such that $(A \cup S) \setminus \{a\}$ is a dominating set of $G$.
\end{lemma}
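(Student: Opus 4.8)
The statement asks us to find, given two equal-size dominating sets $A,B$ with $|B\setminus A|\ge d$, a vertex $a\in A\setminus B$ and a set $S\subseteq B\setminus A$ of size $d-1$ such that swapping them produces a dominating set. The natural strategy is to argue by contradiction: suppose that for *every* choice of $a\in A\setminus B$ and *every* $(d-1)$-subset $S$ of $B\setminus A$, the set $(A\cup S)\setminus\{a\}$ fails to be dominating. I would extract from each such failure a "witness" vertex that becomes undominated, and then assemble these witnesses into a bipartite minor of $G$ that is too dense — contradicting the $d$-minor sparseness hypothesis.

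**Extracting witnesses.** Fix $a\in A\setminus B$. For a set $T\subseteq B\setminus A$, adding $T$ to $A$ can only help domination, so $(A\cup T)\setminus\{a\}$ fails to dominate some vertex iff there is a vertex $x$ whose only neighbour (in the closed-neighbourhood sense) inside $A$ is $a$, and moreover $x$ has no neighbour in $T$. The key point is monotonicity in $T$: if $(A\cup S)\setminus\{a\}$ is not dominating for some $(d-1)$-set $S$, I would like to first observe that then $(A\cup(B\setminus A))\setminus\{a\} = (A\cup B)\setminus\{a\}$ might still be dominating, so we cannot simply take $T = B\setminus A$. Instead the right move is: since $B$ is a dominating set and $B\subseteq A\cup B$, the vertex $a$ is "redundant" with respect to $B$'s private region only if some vertex $x\in N[a]$ has all of its neighbours in $A\cup B$ concentrated at $a$ together with vertices of $B\setminus A$ — and we get to choose which $d-1$ of those $B\setminus A$-neighbours to *omit*. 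So for each $a$, let $P_a = \{x : N[x]\cap A = \{a\}\}$ be the vertices privately dominated by $a$ within $A$. If $P_a=\emptyset$ then $S=\emptyset$-padding already works; so assume $P_a\ne\emptyset$ for all $a\in A\setminus B$. Each $x\in P_a$ is dominated by $B$, hence has a neighbour in $B$, and since $N[x]\cap A=\{a\}$ with $a\notin B$, that neighbour lies in $B\setminus A$. The failure assumption says: for every $a$ and every $(d-1)$-subset $S$ of $B\setminus A$, some $x\in P_a$ has $N(x)\cap S=\emptyset$, i.e. the $B\setminus A$-neighbourhood of $x$ avoids $S$.

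**Building the dense bipartite minor.** Here is the counting heart. Consider the bipartite "incidence" structure between $A\setminus B$ on one side and $B\setminus A$ on the other, where I will contract each private set appropriately. Concretely: for each $a\in A\setminus B$ pick, if possible, a vertex $x_a\in P_a$ whose $B\setminus A$-neighbourhood is *small* — the failure condition, quantified over all $(d-1)$-subsets $S$, forces that the sets $\{N(x)\cap(B\setminus A): x\in P_a\}$ form a "covering-resistant" family, from which one deduces (a Helly/sunflower-type or simple pigeonhole argument) that either some $x_a$ has at most... — this is the step I'd need to get right. The cleanest version: if *every* vertex in $P_a$ had a neighbourhood in $B\setminus A$ of size $\ge d-1$... that doesn't immediately contradict. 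Rather, I think the argument contracts $A$ down to the side $A\setminus B$, contracts $B$ down to $B\setminus A$, keeps edges between them, adds back the witness vertices $x_a$, and counts edges: the minor has roughly $|A\setminus B|$ vertices corresponding to $A$-side, $|B\setminus A|\ge d$ on the $B$-side, and enough forced edges (at least one per $a$ from each uncovered witness, plus the edges making $x_a$ dominated only through $a$ and through $B\setminus A$) to push the edge/vertex ratio to $\ge d/2$, contradicting $d$-minor sparseness.

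**Where the difficulty lies.** The main obstacle is the combinatorial extraction in the previous paragraph: converting "for all $(d-1)$-subsets $S$, domination fails" into a clean statement about a *single* bipartite minor whose density exceeds $d/2$. The quantifier over all $(d-1)$-subsets is the subtle part — one must choose the omitted set $S$ cleverly (perhaps greedily, removing one $B\setminus A$-vertex at a time while maintaining that some private witness stays uncovered) so that the vertices that remain "blocking" can all be charged to distinct edges of the minor. I expect the proof to proceed by: (1) reduce to $P_a\ne\emptyset$ for all $a\in A\setminus B$; (2) for each such $a$, use the failure hypothesis with a well-chosen $S$ to produce a witness $x_a$ with controlled neighbourhood into $B\setminus A$; (3) form the bipartite minor $H$ with parts (contractions of) $A\setminus B$ and $B\setminus A$ together with the $x_a$'s; (4) lower-bound $|E(H)|$ by summing the forced edges and upper-bound $|V(H)|$, obtaining $|E(H)|/|V(H)|\ge d/2$; (5) invoke $d$-minor sparseness for the contradiction. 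Routine checks — that the contractions are legitimate (the relevant subsets induce connected subgraphs, or can be made to via the domination edges), and that $H$ is genuinely bipartite after contraction — I would defer to the formal write-up.
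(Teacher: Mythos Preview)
Your skeleton is the paper's: assume failure for every $a\in A\setminus B$ and every $(d-1)$-subset $S\subseteq B\setminus A$, then build a bipartite minor on $(A\setminus B)\cup(B\setminus A)$ that is too dense. But the step you flag as ``where the difficulty lies'' is genuinely missing, and the directions you suggest (Helly/sunflower, or a \emph{single} witness $x_a$ with a small $B\setminus A$-neighbourhood) do not work. A single witness per $a$ gives only one edge per $a$ after contraction, which is far short of the $d$ edges per $a$ you need; and there is no covering/intersection structure here for a Helly-type argument to bite on.

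The paper's missing idea is an iterative greedy choice of $S$, producing $d$ witnesses per $a$. Fix $a_i\in A\setminus B$. Take any $(d-1)$-set $S_{i,1}\subseteq B\setminus A$; the failure hypothesis yields $x_{i,1}$ privately dominated by $a_i$ in $A$ and undominated by $S_{i,1}$, hence dominated in $B$ by some $b_{i,1}\in (B\setminus A)\setminus S_{i,1}$. Now take $S_{i,2}$ to \emph{contain} $b_{i,1}$; the new witness $x_{i,2}$ is undominated by $S_{i,2}$, so its $B$-dominator $b_{i,2}$ is distinct from $b_{i,1}$. Repeat $d$ times, always putting the previously found $b_{i,1},\dots,b_{i,j-1}$ into $S_{i,j}$ (this is possible precisely because $|B\setminus A|\ge d$), obtaining $d$ pairwise distinct $b_{i,1},\dots,b_{i,d}\in B\setminus A$ and corresponding witnesses $x_{i,1},\dots,x_{i,d}$. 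Since each $x_{i,j}$ is adjacent to $a_i$ (private domination) and to $b_{i,j}$, contracting the edges $a_ix_{i,j}$ (the $x_{i,j}$ are pairwise distinct and, when not already in $A\setminus B$ or $B\setminus A$, can be merged into $a_i$) and deleting everything else yields a bipartite minor on $2|A\setminus B|$ vertices in which every vertex on the $A\setminus B$ side has degree at least $d$. That is average degree at least $d$, contradicting $d$-minor sparseness. The contractions are single-edge contractions into $a_i$ (a star), so connectivity is automatic; your more elaborate contraction scheme is unnecessary.
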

\begin{proof}
We prove it by contradiction.
For every $a_i\in A\setminus B$, let $S_{i,1}$ be a subset of $B\setminus A$ of size $d-1$. Let $x_{i,1}$ be a vertex that is only dominated by $a_i$ in $A$ and not dominated by $S_{i,1}$ in $B$ (such a vertex must exist otherwise the conclusion follows). Note that this vertex might be a vertex of $A$ or of $B$.
Let $b_{i,1}$ be a vertex of $(B\setminus A)\setminus S_{i,1}$ that dominates $x_{i,1}$. This vertex exists since $B$ is a dominating set and $x_{i,1}$ is only dominated by $a_i$ in $A$. Now, for every $2\leq j\leq d$, we define recursively the set $S_{i,j}$ as a subset of size $d-1$ of $B\setminus A$ containing $\{b_{i,1},\ldots,b_{i,j-1}\}$. We let $x_{i,j}$ be a vertex only dominated by $a_i$ in $A$ that is not dominated by $S_{i,j}$ in $B$, and $b_{i,j}$ be a vertex of $(B\setminus A)\setminus S_{i,j}$ that dominates $x_{i,j}$. Note that, for every $j$, since $x_{i,j}$ is incident to $b_{i,j}$ and not to  $S_{i,j}$, $b_{i,j} \notin \{ b_{i,1},\ldots,b_{i,j-1} \}$. In particular, $B_i:=\{b_{i,1},\ldots,b_{i,d}\}$ has size exactly $d$. Note that $B_i \subseteq B \setminus A$. The construction of the set $B_i$ is illustrated in Figure \ref{fig:minor}.

\begin{figure}[bt]
	\centering
	\scalebox{0.8}{
	\begin{tikzpicture}[scale=0.2]
    
    \draw[fill=gray!20] (0,16) ellipse (25cm and 3cm);
    \draw[fill=gray!20] (0,0) ellipse (25cm and 3cm);
    \node[right] at (25,16) {$B\setminus A$};
    \node[right] at (25,0) {$A\setminus B$};
    
    \node[draw,thick,circle,fill=black,inner sep=1.5pt,label=left:$a_i$] (ai) at (5,0) {};
    \node[draw,thick,circle,fill=black,inner sep=1.5pt,label=left:$a_j$] (aj) at (-15,0) {};
    \node[draw,thick,circle,fill=black,inner sep=1.5pt,label=right:$x_{i,1}$] (x11) at (-5,8) {};
    \node[draw,thick,circle,fill=black,inner sep=1.5pt,label=left:$b_{i,1}$] (b11) at (-5,16) {};
    \node[draw,thick,circle,fill=black,inner sep=1.5pt,label=right:$x_{i,2}$] (x12) at (5,8) {};
    \node[draw,thick,circle,fill=black,inner sep=1.5pt,label=left:$b_{i,2}$] (b12) at (5,16) {};
    \node[draw,thick,circle,fill=black,inner sep=1.5pt,label=right:$x_{i,3}$] (x13) at (15,8) {};
    \node[draw,thick,circle,fill=black,inner sep=1.5pt,label=left:$b_{i,3}$] (b13) at (15,16) {};
    
    \node (dots) at (18,16) {$\ldots$};
    \node (dots) at (18,0) {$\ldots$};
    \node (dots) at (22,8) {$\ldots$};
    
    \draw[dashed] (aj) -- (x11);
    \draw[dashed] (aj) -- (x12);
    \draw[dashed] (aj) -- (x13);

    \draw[thick] (x11)-- (b11);
    \draw[decorate,decoration={zigzag}] (ai) -- (x11);
    \draw[thick] (x12)-- (b12);
    \draw[decorate,decoration={zigzag}] (ai) -- (x12);
    \draw[thick] (x13)-- (b13);
    \draw[decorate,decoration={zigzag}] (ai) -- (x13);
    
    \draw[dashed] (x12)-- (b11);
    \draw[dashed] (x13)-- (b11);
    \draw[dashed] (x13)-- (b12);
    
	\end{tikzpicture}}
    \caption{The set $B_i$. The dotted lines represent the non-edges, and the zigzags represent the edges that are contracted in $G'$.}
	\label{fig:minor}
\end{figure}
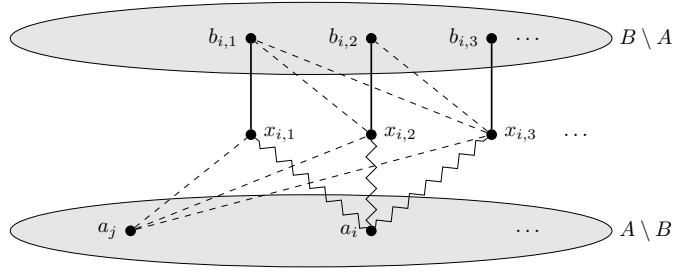

Let us construct a minor $G'$ of $G$ of density at least $d$. For every $a_i\in A\setminus B$, we contract the edges $a_ix_{i,j}$ for any $j$ such that $x_{i,j}\not \in B\setminus A$ and $x_{i,j}\not \in A\setminus B$. (In particular, if $x_{i,j} \in B$, we do not contract the edge). By abuse of notations, we still denote by $a_i$ the resulting vertex. Note that the vertices $x_{i,j}$ are pairwise disjoint. If $x_{i,j}=x_{i',j'}$ then, since $x_{i,j}$ is only dominated by $a_i$ and $x_{i',j'}$ by $a_i'$, we must have $a_i=a_i'$. And by construction in the previous paragraph, $x_{i,j} \ne x_{i,j'}$ if $j \ne j'$. So the contractions defined above are well defined. Moreover, the size of $A\setminus B$ is left unchanged. Similarly the size of $B \setminus A$ is not modified. We finally remove from the graph any vertex which is not in $(A\setminus B)\cup (B\setminus A)$, and any edge internal to $A\setminus B$ and to $B\setminus A$. The resulting graph $G'$ is a minor of $G$ and is bipartite. 

For every $i,j$, $x_{i,j}$ is adjacent to every vertex of $B_i$ in $G$. Thus, $a_i$ is adjacent to every vertex of $B_i$ in $G'$. Therefore, for any $a_i\in A\setminus B$, $a_i$ has degree at least $d$ in $G'$. Thus, there are at least $d \cdot |A\setminus B|$ edges in $G'$. Since $G'$ has $|A\setminus B|+|B\setminus A| = 2|A\setminus B|$ vertices, it contradicts the fact that $G$ is a $d$-minor sparse graph.
\end{proof}

\begin{lemma}\label{lem:minorUB}
Let $G$ be a $d$-minor sparse graph.
If $k=\Gamma(G)+d-1$, then $\mathcal{R}_k(G)$ is connected and the diameter of $\mathcal{R}_k(G)$ is at most $2\Gamma(G)\cdot(d-1)+2(\Gamma(G)-1)$.
\end{lemma}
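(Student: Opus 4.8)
The goal is to show that with $k = \Gamma(G) + d - 1$ tokens, any two dominating sets can be connected by a short reconfiguration sequence in a $d$-minor sparse graph. The natural first step, exactly as in the proof of Theorem~\ref{thm:upper-bound}, is to reduce to the case where both endpoints $D_1$ and $D_2$ are inclusion-wise minimal dominating sets: starting from an arbitrary dominating set of size at most $k$, we can remove vertices one by one (each intermediate set stays dominating and its size only decreases) until we reach a minimal dominating set, using at most $k - \gamma(G) \le \Gamma(G) + d - 1$ removals on each side. So henceforth assume $|D_1|, |D_2| \le \Gamma(G)$.

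**The main engine.** The heart of the argument is to use Lemma~\ref{lem:contradiction} to make $D_1$ and $D_2$ "agree more." However, Lemma~\ref{lem:contradiction} requires $|A| = |B|$, so I would first pad the smaller of the two sets: if $|D_1| < |D_2|$, add arbitrary vertices to $D_1$ until $|D_1| = |D_2| \le \Gamma(G)$ (this keeps $D_1$ dominating and its size at most $\Gamma(G) \le k$). Now, as long as $|D_1 \setminus D_2| \ge d$, Lemma~\ref{lem:contradiction} (with $A = D_2$, $B = D_1$, so that the conclusion produces a set in $D_1 \setminus D_2$) gives a vertex $a \in D_2 \setminus D_1$ and a set $S \subseteq D_1 \setminus D_2$ of size $d-1$ such that $(D_2 \cup S)\setminus\{a\}$ is dominating. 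The key observation is that we can realize the transition $D_2 \rightsquigarrow (D_2 \cup S) \setminus \{a\}$ by first adding the $d-1$ vertices of $S$ to $D_2$ (reaching size $|D_2| + d - 1 \le \Gamma(G) + d - 1 = k$, so we never exceed the threshold), then removing $a$. This increases $|D_1 \cap D_2|$ by at least $d-1 \ge 1$. Actually, to keep the sizes matched for the next iteration, I would instead move $a$ from the "$D_2$" side and correspondingly, or simply re-pad at each step — the cleanest bookkeeping is probably to work with a single moving set $D$ initialized to $D_2$, and repeatedly apply the lemma to bring $D$ closer to $D_1$: each application costs $d$ reconfiguration steps ($d-1$ additions plus one deletion) and strictly decreases $|D_1 \setminus D|$ while keeping $|D| = |D_1|$ (we remove one vertex of $D \setminus D_1$ and add $d-1$ vertices which we may take from $D_1 \setminus D$, so the symmetric difference shrinks). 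After at most $\Gamma(G)/(d-1) \cdot$ well, at most roughly $|D_1|$ iterations — being slightly generous, at most $\Gamma(G)$ iterations, but actually each iteration removes one element of $D \setminus D_1$ so at most $|D_1| \le \Gamma(G)$ iterations — we reach a set $D$ with $|D_1 \setminus D| < d$, in fact we can drive it down so that $D \subseteq D_1$, hence $D = D_1$ since sizes match. This gives a transformation from $D_2$ to $D_1$ of length at most $\Gamma(G) \cdot d$, but tightening the count (each of the $\le \Gamma(G)$ deletions is charged $d$ steps) yields the claimed bound $2\Gamma(G)(d-1) + 2(\Gamma(G)-1)$ once we also fold in the two initial reductions to minimal dominating sets (the factor $2$ and the $2(\Gamma(G)-1)$ term).

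**The counting.** To match the stated diameter bound $2\Gamma(G)(d-1) + 2(\Gamma(G)-1)$, I would account as follows: the reduction of $D_1$ and $D_2$ to minimal dominating sets and then back out costs at most $2(\Gamma(G)-1)$ steps total (removing down to minimal, then the symmetric reasoning — here one has to be a little careful, the $2(\Gamma(G)-1)$ should absorb both the initial trimming and possibly a final adjustment); the core loop runs at most $\Gamma(G)$ times (bounded by $|D_1 \setminus D| \le |D_1| \le \Gamma(G)$, decreasing each round), and each round costs at most... hmm, $d-1$ additions and $1$ deletion is $d$ steps, giving $\Gamma(G) d$, whereas the target has $2\Gamma(G)(d-1)$. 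So either my loop count is off by a factor near $2$ or the per-round cost should be counted as $\le 2(d-1)$; I would sort this out by observing that when $D_1 \setminus D$ has size between $1$ and $d-1$ we can finish in one final batch of at most $d-1$ additions and $d-1$ deletions (reaching $D_1$ directly via $D \cup (D_1 \setminus D)$ then trimming), and that the "full" rounds, each shrinking $|D \setminus D_1|$ by exactly one via one deletion plus $d-1$ additions, can be amortized.

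**The main obstacle.** The genuinely delicate point is the interaction between keeping the sizes equal (required to reapply Lemma~\ref{lem:contradiction}) and guaranteeing the symmetric difference actually shrinks at every step, while simultaneously never exceeding the threshold $k = \Gamma(G) + d - 1$. In a single round we add $d-1$ vertices then remove one, so the peak size is $|D| + d - 1$; this is $\le k$ precisely because $|D| = |D_1| \le \Gamma(G)$, which is why the reduction to minimal (hence size $\le \Gamma(G)$) dominating sets at the start is essential and cannot be skipped. The second subtlety is ensuring the $d-1$ vertices of $S$ can be chosen inside $D_1 \setminus D$ so that $|D_1 \setminus D|$ drops — Lemma~\ref{lem:contradiction} only promises $S \subseteq D_1 \setminus D$ of size exactly $d-1$, which is what we want, but it requires $|D_1 \setminus D| \ge d$ to even be applicable, so the loop must stop once $|D_1 \setminus D| < d$ and be finished off by hand, and getting the arithmetic of that endgame to land exactly on $2\Gamma(G)(d-1) + 2(\Gamma(G)-1)$ is the part I expect to need the most care. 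Everything else (each intermediate set is dominating because it contains $D_1$ or $D_2$ or an application of the lemma certifies it) is routine.
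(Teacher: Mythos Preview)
Your overall approach is exactly the paper's: iterate Lemma~\ref{lem:contradiction} to push a moving set toward the target, one round at a time. But there is a concrete error in your loop invariant. You write that one round---adding the $d-1$ vertices of $S$ and deleting the single vertex $a$---``keeps $|D| = |D_1|$.'' It does not: the size increases by $d-2$. This breaks the argument in two places. First, Lemma~\ref{lem:contradiction} requires $|A| = |B|$, so you cannot reapply it in the next round. Second, after one round the moving set already has size up to $\Gamma(G) + d - 2$, and the next batch of $d-1$ additions would push the peak to $\Gamma(G) + 2d - 3 > k$.

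The fix---and the resolution of the arithmetic puzzle you flagged---is to trim the moving set back down to size exactly $\Gamma(G)$ at the end of every round. After adding $S$ and removing $a$ the set has size $\Gamma(G)+d-2$ and is still dominating, so one may delete a further $d-2$ vertices while staying dominating (stop as soon as size hits $\Gamma(G)$). One full round therefore costs $(d-1)+1+(d-2)=2(d-1)$ steps, not $d$, and restores the equal-size hypothesis. Those extra deletions may hit vertices of $D_1$, but $|D_1 \setminus D|$ still drops by at least one per round: it drops by $d-1$ when $S$ is added and can rise by at most $d-2$ from the trimming. Since $|D_1 \setminus D| \le \Gamma(G)$ initially, at most $\Gamma(G)$ rounds suffice, giving the $2\Gamma(G)(d-1)$ term.

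A smaller point on the setup: rather than reducing both sets to \emph{minimal} dominating sets and then padding one up, the paper normalizes both directly to size \emph{exactly} $\Gamma(G)$ (adding or removing as needed); this is what the $2(\Gamma(G)-1)$ term pays for, and it immediately delivers the equal-size hypothesis needed for Lemma~\ref{lem:contradiction}. The paper also first disposes of the case $d > \Gamma(G)$ via Theorem~\ref{thm:upper-bound} (equivalently, the whole argument then collapses to its base case $|D_\target \setminus D_\source| \le d-1$).
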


\begin{proof}
Firstly, if $d > \Gamma(G)$, then the result follows from Theorem~\ref{thm:upper-bound}. So we assume $d \leq \Gamma(G)$. We proceed by induction on $|\target \setminus D_\source|$.
Let $D_\source$ and $D_\target$ be two dominating sets of $G$ of size at most $k$. Since $\Gamma(G)$ is the maximum size of a dominating set minimal by inclusion, we can add or remove vertices from $D_\source$ and $D_\target$ so that $D_\source$ and $D_\target$ both have size exactly $\Gamma(G)$, keeping dominating sets. Note that by assumption, we need to remove or add at most $2(\Gamma(G)-1)$ vertices in total. So from now on, we assume that $|D_\source| = |D_\target| = \Gamma(G)$. Let us show that there is a path from $D_\source$ to $D_\target$ in $\mathcal{R}_k(G)$ of length at most $2|D_\target \setminus D_\source|\cdot(d-1)$. Since $|D_\target \setminus D_\source|\leq \Gamma(G)$, and by taking into account the $2(\Gamma(G)-1)$ vertices eventually initially added or removed, this will give the expected result. We proceed by induction on $|D_\target \setminus D_\source|$.

If $|D_\target \setminus D_\source|\leq d-1$ then, since $|D_\source| = \Gamma(G)$, we have $|D_\source \cup D_\target|\leq \Gamma(G)+d-1$. Thus, we can simply add all the vertices of $D_\target \setminus D_\source$ to $D_\source$ and then remove the vertices of $D_\source \setminus D_\target$. We thus obtain a path from $D_\source$ to $D_\target$ in $\mathcal{R}_k(G)$ of length at most $2d-2\leq 2| D_\target \setminus D_\source|\cdot(d-1)$.

Assume now that $|D_\target \setminus D_\source|\geq d$. By Lemma~\ref{lem:contradiction}, there exists a vertex $v \in D_\source \setminus D_\target$ and a set $S \subset D_\target \setminus D_\source$ with $|S|=d-1$ such that $D_\source':= (D_\source \cup S) \setminus \{v\}$ is a dominating set of $G$. Let $D_\source''$ be any dominating set of size exactly $\Gamma(G)$ obtained by removing vertices of $D_\source'$, i.e. such that $D_\source'' \subseteq D_\source'$. Since $|S|=d-1$ and $|D_\source|=\Gamma(G)$, the transformation that consists in adding every vertex of $S$ to $D_\source$ and then removing $v$ and every vertex of $D_\source'\setminus D_\source''$ is a path from $D_\source$ to $D_\source''$ in $\mathcal{R}_k(G)$. Moreover, $|D_\source'|= \Gamma(G)+d-2$. Thus, this path has length $2d-2$.

We have $D_\source':= (D_\source \cup S) \setminus \{v\}$ where $v\in D_\source\setminus D_\target$ and $S\subset D_\target\setminus D_\source$ with $|S|=d-1$. Thus, $|D_\target \setminus D_\source '|=|D_\target \setminus D_\source|-d+1$. Since $D_\source''\subseteq D_\source'$ and $|D_\source'\setminus D_\source''|\leq d-2$, it gives $|D_\target \setminus D_\source''|\leq |D_\target \setminus D_\source|-1$. By induction hypothesis, there exists a path from $D_\source''$ to $D_\target$ in $\mathcal{R}_k(G)$ of length at most $|D_\target \setminus D_\source''|\cdot(2d-2)$. The concatenation of the two paths gives a path from $D_\source$ to $D_\target$ in $\mathcal{R}_k(G)$ of length at most $2|D_\target \setminus D_\source|\cdot(d-1)$. This concludes the proof.
\end{proof}

 Let us now state two immediate corollaries of Lemma~\ref{lem:minorUB}: 
 
\begin{corollary}\label{coro:minor}
Let $G$ be a graph. Then, we have the following:
\begin{itemize}
    \item if $G$ is planar, then $\mathcal{R}_{k}(G)$ is connected and has linear diameter for every $k \ge \Gamma(G)+3$.
    \item if $G$ is $K_\ell$-minor free, then there exists a constant $C$ such that $\mathcal{R}_k(G)$ is connected and has linear diameter for every $k \ge \Gamma(G)+C \ell \sqrt{\log_2 \ell}$.
\end{itemize}
\end{corollary}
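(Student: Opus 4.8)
The plan is to derive both items directly from Lemma~\ref{lem:minorUB}, the only extra ingredient being classical density bounds for minors. I would first record one routine observation: the proof of Lemma~\ref{lem:minorUB} actually produces, for any two dominating sets of size at most $\Gamma(G)+d-1$, a reconfiguration path all of whose members have size at most $\Gamma(G)+d-2 \le \Gamma(G)+d-1$, and of length bounded by $2\Gamma(G)\cdot(d-1)+2(\Gamma(G)-1)$. Consequently the statement of Lemma~\ref{lem:minorUB} holds verbatim for \emph{every} threshold $k\ge \Gamma(G)+d-1$, with the same length bound, which is linear in $n$ since $\Gamma(G)\le n$. (One could instead cite the Haas--Seyffarth monotonicity lemma for the connectivity part, but that statement does not transport the diameter bound, so re-inspecting the proof is cleaner.)

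For the planar case, I would use that the class of planar graphs is minor-closed, so every bipartite minor of a planar graph $G$ is a bipartite planar graph. By Euler's formula, a bipartite planar graph on $n$ vertices has at most $2n-4$ edges when $n\ge 3$, and trivially fewer than $2n$ edges when $n\le 2$; hence every bipartite minor of $G$ has fewer than $2n$ edges, i.e. edge-to-vertex ratio strictly less than $2$, which is exactly the condition that $G$ be $4$-minor sparse. Applying Lemma~\ref{lem:minorUB} with $d=4$ (together with the observation above) then shows that $\mathcal{R}_k(G)$ is connected and has diameter at most $2\Gamma(G)\cdot 3+2(\Gamma(G)-1)=8\Gamma(G)-2\le 8n$ for every $k\ge\Gamma(G)+3$.

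For the $K_\ell$-minor free case I would invoke the Kostochka--Thomason theorem: there is an absolute constant $c$ such that every graph with no $K_\ell$ minor has average degree at most $c\,\ell\sqrt{\log_2\ell}$. Since $K_\ell$-minor-freeness is hereditary under taking minors, every bipartite minor of a $K_\ell$-minor free graph $G$ has average degree at most $c\,\ell\sqrt{\log_2\ell}$, so $G$ is $d$-minor sparse with $d:=\lfloor c\,\ell\sqrt{\log_2\ell}\rfloor+1$. Lemma~\ref{lem:minorUB} then gives that $\mathcal{R}_k(G)$ is connected with diameter at most $2\Gamma(G)(d-1)+2(\Gamma(G)-1)$, which is linear in $n$ (with a constant depending on $\ell$), for every $k\ge\Gamma(G)+d-1$; taking $C:=c+1$ guarantees $\Gamma(G)+C\ell\sqrt{\log_2\ell}\ge\Gamma(G)+d-1$, which yields the claimed range of $k$.

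The mathematical content here is essentially just the two density estimates (Euler's formula and the Kostochka--Thomason bound), and everything else is bookkeeping. The only thing requiring a little care — and the reason I would re-examine rather than cite the monotonicity lemma — is ensuring that the \emph{diameter} bound, not merely connectivity, survives the passage from the exact threshold $\Gamma(G)+d-1$ to all larger thresholds. Pinning down the optimal constant in front of $\ell\sqrt{\log_2\ell}$ would require Thomason's sharp form of the extremal function, but for the statement as phrased any admissible constant $c$ suffices.
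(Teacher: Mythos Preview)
Your proposal is correct and follows essentially the same approach as the paper: invoke Euler's formula to get $4$-minor sparseness for planar graphs, and the Kostochka--Thomason/Thomason bound for $K_\ell$-minor free graphs, then apply Lemma~\ref{lem:minorUB}. You are in fact more careful than the paper on one point: the corollary is stated for every $k\ge\Gamma(G)+d-1$ while Lemma~\ref{lem:minorUB} is only stated for $k=\Gamma(G)+d-1$, and you explicitly bridge this gap (the paper's proof does not). One minor inaccuracy: the intermediate sets in the proof of Lemma~\ref{lem:minorUB} can reach size $\Gamma(G)+d-1$ (when $S$ has been added but $v$ not yet removed), not $\Gamma(G)+d-2$ as you write; this does not affect your argument since $\Gamma(G)+d-1\le k$ is all that is needed.
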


\begin{proof}
Every minor of a planar graph is planar. Moreover every bipartite planar graph has at most $2n-4$ edges. Thus every planar graph is a $4$-minor sparse graph and the first point follows from Lemma~\ref{lem:minorUB}.

A result of Thomason~\cite{thomason84} (improving a result of Mader~\cite{Mader68}) ensures that the average degree of a  $K_\ell$-minor free graph is at most $0.265 \cdot\ell \sqrt{\log_2 \ell}(1 + o(1))$. In particular, there exists a constant $C$ such that, for every $\ell$ and every $K_\ell$-minor free graph $G$, the average degree of $G$ is at most $C \ell \sqrt{\log_2 \ell}$. Thus $G$ is $C \ell \sqrt{\log_2 \ell}$-minor sparse and the second point follows from  Lemma~\ref{lem:minorUB}.
\end{proof}

We were not able to find an example where $\Gamma(G)+3$ is needed for planar graphs. We also know that $\Gamma(G)+1$ is not enough. Indeed, Suzuki et al \cite{SuzukiMN16} gave an example of a planar graph $G$ for which $\mathcal{R}_{\Gamma(G)+1}(G)$ is disconnected. The graph $G$ is given in Figure \ref{fig:planar}.

\begin{figure}
	\centering
	\scalebox{0.8}{
	\begin{tikzpicture}[scale=0.8]
    
    \node[draw,thick,circle,fill=black,inner sep=1.5pt] (a1) at (0.5,0.5) {};
    \node[draw,thick,circle,fill=black,inner sep=1.5pt] (a2) at (5.5,0.5) {};
    \node[draw,thick,circle,fill=black,inner sep=1.5pt] (a3) at (3,5.5) {};
    
    \node[draw,thick,circle,fill=white,inner sep=1.5pt] (b1) at (1.5,1.3) {};
    \node[draw,thick,circle,fill=white,inner sep=1.5pt] (b2) at (4.5,1.3) {};
    \node[draw,thick,circle,fill=white,inner sep=1.5pt] (b3) at (3,4) {};
    
    \node[draw,thick,circle,fill=black,inner sep=1.5pt] (c1) at (2.5,2) {};
    \node[draw,thick,circle,fill=black,inner sep=1.5pt] (c2) at (3.5,2) {};
    \node[draw,thick,circle,fill=black,inner sep=1.5pt] (c3) at (3,3) {};

    \draw[thick] (a1)-- (a2) -- (a3) -- (a1);
    \draw[thick] (b1)-- (b2) -- (b3) -- (b1);
    \draw[thick] (c1)-- (c2) -- (c3) -- (c1);
    \draw[thick] (a1)-- (b1) -- (c1);
    \draw[thick] (a2)-- (b2) -- (c2);
    \draw[thick] (a3)-- (b3) -- (c3);
    
	\end{tikzpicture}}
	\caption{The planar graph $G$ such that $\mathcal{R}_{\Gamma(G)+1}$ is not connected.}
	\label{fig:planar}
\end{figure}
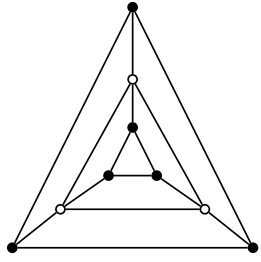

It is easily seen that $\Gamma(G)=3$. Moreover, if we consider the dominating set in white, in order to remove a vertex, we must add the two black vertices it is adjacent to, thus reaching a dominating set of size $\Gamma(G)+2$. We leave the question whether $\mathcal{R}_k(G)$ is connected if $G$ is planar and $k \ge \Gamma(G)+2$ as an open problem (see Conjecture~\ref{conjecture-planar}).

\section{Bounded treewidth graphs}\label{sec:twUB}

\begin{theorem}\label{them:twUB}
Let $G=(V,E)$ be a graph. If $k = \Gamma(G)+ tw(G) + 1$, then $\mathcal{R}_k(G)$ is connected. Moreover, the diameter of $\mathcal{R}_k(G)$ is at most $4(n+1) \cdot (tw(G)+1)$.
\end{theorem}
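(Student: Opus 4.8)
The plan is to use a rooted tree decomposition of $G$ as a schedule and rewrite the current dominating set bag by bag, committing it to agree with $D_\target$ on an ever larger part of $V$ while never keeping more than $k=\Gamma(G)+tw(G)+1$ tokens on the board. Fix a tree decomposition $(\mathcal{X},T)$ of $G$ of width $w:=tw(G)$ with at most $n+1$ bags, root $T$ at an arbitrary bag, and for a non-root bag $B$ with parent $B'$ write $\mathrm{Sep}(B):=B\cap B'$ (a separator with $|\mathrm{Sep}(B)|\le w+1$), let $V_B$ be the union of all bags in the subtree of $T$ rooted at $B$, and call $L_B:=V_B\setminus\mathrm{Sep}(B)$ the \emph{interior} of $B$. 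The property that drives the whole argument is that $\mathrm{Sep}(B)$ separates $L_B$ from $V\setminus V_B$: every vertex of $L_B$ has all of its neighbours inside $V_B$.

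As in the proof of Theorem~\ref{thm:upper-bound}, one first repeatedly deletes a non-necessary vertex to make $D_\source$ and $D_\target$ inclusion-wise minimal; this uses at most $2(\Gamma(G)-1)$ moves, never exceeds $\Gamma(G)\le k$ tokens, and reduces the task to connecting two minimal dominating sets, so assume $|D_\source|,|D_\target|\le\Gamma(G)$. Now list the bags $B_1,\dots,B_N$ ($B_N$ the root) in a post-order, i.e. each bag after all of its children, and process them in this order, maintaining the invariant that after step $i$ the current set $D$ is a dominating set, $|D|\le k$, and $D$ agrees with $D_\target$ on the union $L^{(i)}$ of the interiors of the bags already processed. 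Since $L^{(N)}=V$ (the root has empty separator), the sweep ends at $D=D_\target$, so it produces a transformation from $D_\source$ to $D_\target$ as required.

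To process a bag $B_i$ — all of whose children are already handled, so $D$ agrees with $D_\target$ on $L^{(i-1)}$ — one checks that $F_i:=L_{B_i}\setminus L^{(i-1)}$ is contained in $B_i$, whence $|F_i|\le w+1$, and that $F_i$ is exactly the set of vertices on which $D$ still has to be brought into agreement with $D_\target$. We do this in a few blocks of at most $w+1$ moves: (i) add every vertex of $(D_\target\cap F_i)\setminus D$, which keeps $D$ dominating; (ii) add at most $w+1$ "guard" tokens, one for each vertex of $\mathrm{Sep}(B_i)$ that might lose its last dominator in the next block — such guards exist because $D_\target$ dominates all of $V$, and there are at most $w+1$ vertices of $\mathrm{Sep}(B_i)$ to protect; (iii) delete the vertices of $(D\cap F_i)\setminus D_\target$ one at a time, which is safe because any vertex privately dominated by a deleted $v\in F_i\subseteq L_{B_i}$ lies in $N[v]\subseteq V_{B_i}$, is not dominated by $D_\target\cap L^{(i)}\subseteq D$, and hence by the interior property lies in $\mathrm{Sep}(B_i)$, where it was just guarded; (iv) finally, greedily delete redundant vertices lying outside $L^{(i)}$ to restore the size part of the invariant. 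Each bag costs $O(w)$ moves; together with the preprocessing this yields a transformation of length $O(n\cdot tw(G))$, and a careful count gives the stated $4(n+1)\cdot(tw(G)+1)$.

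The crux, and essentially the only delicate point, is the token count: one must check both that the four blocks above never transiently exceed $k$, and — more importantly — that at every rest point the part of $D$ outside $L^{(i)}$ can be pushed back down so that $|D|\le\Gamma(G)+w+1$. The danger is generic: an invariant that merely keeps $D_\target$ on the processed interior and an arbitrary dominating set on the remainder can accumulate close to $2\Gamma(G)$ tokens. The way out is that the processed region $L^{(i)}$ and the unprocessed remainder interact only through the separators appearing in the decomposition, of total relevant size $O(w)$, so that gluing $D_\target$ on one side of such a separator to a minimal dominating set of the other side produces a dominating set of $G$ that is inclusion-minimal \emph{except} at $O(w)$ vertices lying on the separator — hence has size at most $\Gamma(G)+O(w)$. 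Sharpening this "pasting across a small separator" estimate, organizing the post-order so that only $O(w)$ separator vertices are ever active at once, and thereby landing inside the exact slack $w+1$, is the real content of the proof; the combinatorial skeleton above is otherwise routine bookkeeping.
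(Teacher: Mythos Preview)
Your sweep-over-bags skeleton is the right picture, but the hole you flag at the end — the token count — is exactly where the argument lives, and your suggested closure does not work.

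Two concrete problems. First, the domination check in block (iii) is wrong as written: a vertex $u$ privately dominated by a deleted $v\in F_i$ lies in $V_{B_i}$, but it need not lie in $\mathrm{Sep}(B_i)$. If $u\in L_{B_i}$ then its $D_\target$-dominator may sit in $\mathrm{Sep}(B_i)$, which is disjoint from $L^{(i)}$, so $u$ is covered neither by $D_\target\cap L^{(i)}$ nor by your guards (those protect the separator vertices, not their interior neighbours). This is patchable — e.g.\ throw all of $\mathrm{Sep}(B_i)$ in at block (ii) — but then blocks (i) and (ii) together already add up to $2(w+1)$ tokens, while you have only $w+1$ of slack above $\Gamma(G)$.

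Second, and more seriously, your ``pasting across a small separator'' heuristic does not give what you need. After cleanup (iv) you have $D=(D_\target\cap L^{(i)})\cup M$ with $M\subseteq V\setminus L^{(i)}$ irredundant; you assert this is inclusion-minimal except at $O(w)$ vertices. But vertices of $D_\target\cap L^{(i)}$ can become redundant in $D$ — their $D_\target$-private neighbours may be dominated by $M$ through the separator — and nothing bounds how many such vertices there are. You cannot delete them without breaking your equality invariant $D\cap L^{(i)}=D_\target\cap L^{(i)}$, so the greedy descent in (iv) need not reach size $\le\Gamma(G)+w+1$. The claim that one can ``organise the post-order so that only $O(w)$ separator vertices are ever active at once'' is also unsubstantiated; in a post-order the processed region is a union of complete subtrees whose root-separators need not fit in a single bag.

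The paper closes this gap by a device your sketch does not contain: it does \emph{not} reconfigure $D_\source$ to $D_\target$ directly, but routes both through a \emph{minimum} dominating set $D$. The invariant kept is one-sided, $D_j\cap V_{j-1}\subseteq D$ rather than equality, which allows $|D_j|\le\Gamma(G)$ at every rest point. The price is that at bag $X_j$ one must add a set $C$ of left $D$-vertices that is not a priori bounded by the bag size; the paper shows $|C\cup B_3|\le tw(G)+1$ by proving that $(D\setminus C)\cup A\cup B_1\cup B_2$ is itself a dominating set and then using $|D|\le |(D\setminus C)\cup A\cup B_1\cup B_2|$, which holds precisely because $D$ is minimum. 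That minimality-based exchange inequality is the heart of the argument, and it has no analogue in a direct $D_\source\to D_\target$ sweep with your equality invariant.
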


\begin{proof}
Let $(X,T)$ be a tree decomposition of $G$ such that the maximum size of a bag of $X$ is $tw(G)+1$. Let $b=|X|$. We root the tree $T$ in an arbitrary bag, then set $X:=\{X_1,\ldots,X_b\}$, where for any $X_i,X_j$ such that $X_i$ is a child of $X_j$, we have $i<j$. In other words, $X_1,\ldots,X_b$ is an elimination ordering of the (rooted) tree $T$ where at each step we remove a leaf of the remaining tree. We say that a bag $X_i$ is a \emph{descendant} of $X_j$ if $X_j$ is on the unique path from the root to $X_i$ (in other words, $X_i$ belongs to the subtree rooted in $X_j$ in $T$). Note that, free to contract edges if a bag is included in another, we can assume $b\leq n$. We denote by $V_i$ the set of vertices that do not appear in the set of bags $\cup_{j=i+1}^b X_j$. We set $V_0:= \emptyset$.

Let $D_\source$ and $D_\target$ be two dominating sets. Free to first remove vertices from $D_\source$ and $D_\target$ if possible (which can be done in at most $2(tw(G)+1)$ operations in total), we can assume that $D_\source$ and $D_\target$ have size at most $\Gamma(G)$. Let $D$ be a minimum dominating set of $G$. Instead of proving directly that there exists a reconfiguration sequence from $D_\source$ to $D_\target$, we will prove that that there exists a reconfiguration sequence from $D_\source$ to $D$ and from $D_\target$ to $D$ of length at most $2n \cdot (tw(G)+1)$ each. Since the reverse of a reconfiguration sequence also is reconfiguration sequence, that will give the conclusion, that gives a reconfiguration sequence of the desired length. So the rest of the proof is devoted to prove the following:

\begin{lemma}\label{lem:bdtw}
Let $G=(V,E)$ be a graph and let $D_\source$ be a dominating set of $G$ of size at most $\Gamma(G)$ and $D$ be a minimum dominating set of $G$. If $k = \Gamma(G)+ tw(G) + 1$, then there is a reconfiguration sequence from $D_\source$ to $D$. Moreover, the length of this reconfiguration sequence is at most $2n \cdot (tw(G)+1)$.
\end{lemma}

In order to prove Lemma~\ref{lem:bdtw}, we prove that there exists a sequence $\langle D_1:=D_\source,D_2,\ldots,D_b \rangle$ of dominating sets such that, for every $j$, $D_j$ satisfies the following property $\mathcal{P}$:
\begin{enumerate}[(i)]
    \item $D_j$ is a dominating set of $G$ of size at most $\Gamma(G)$,
    \item For every $j>1$, there exists a transformation sequence of length at most $2(tw(G)+1)$ from $D_{j-1}$ to $D_{j}$ in $\mathcal{R}_k(G)$,
    \item $D_j \cap V_{j-1} \subseteq D$. In other words, the vertices of $D_j$ that only belong to bags in $X_1 \cup \ldots \cup X_{j-1}$ are also in $D$.
\end{enumerate}
So that will provide a reconfiguration sequence in $\mathcal{R}_k(G)$ from $D_\source$ to a dominating set $D_b$ sufficiently close to $D$ to ensure the existence of a transformation from $D_b$ to $D$ of length at most $2n \cdot (tw(G)+1)$. To prove the existence of the sequence, we use induction on $j$.

First note that since $D_\source$ is a dominating set of $G$ of size at most $\Gamma(G)$ and $V_0$ is empty, $D_\source$ satisfies property $\mathcal{P}$. 
Let us now show that if $D_j$ satisfies property $\mathcal{P}$, then there exists a set $D_{j+1}$ that satisfies property $\mathcal{P}$. A vertex $v$ is a \emph{left vertex} (for $X_j$) if $v$ only appears in bags that are descendant of $X_j$. Note that by definition, $X_j$ is a descendant of itself. Otherwise, we say that $v$ is a \emph{right vertex}. When no confusion is possible, we will omit the mention of $X_j$.

\begin{claim}\label{rk:LeftRight}
If a left vertex $u$ (for $X_j$) is adjacent to a right vertex $v$ (for $X_j$), then $v\in X_j$.
\end{claim}

\begin{proofclaim}
Since $u$ and $v$ are adjacent in $G$, there exists a bag $X_i$ which contains both $u$ and $v$. Note that since $u$ is a left vertex, $X_i$ is a descendant of $X_j$. Besides, since $v$ is a right vertex, there exists a bag $X_{i'}$ that contains $v$ and which is not a descendant of $X_i$. Since the set of bags that contain $v$ induces a connected tree, $v$ must belong to each bag on the unique path from $X_i$ to $X_{i'}$. In particular, $v \in X_j$.
\end{proofclaim}

To construct $D_{j+1}$, we define several subsets of vertices (see Figure \ref{fig:tw} for an illustration).
\begin{itemize}
    \item $A$ is the set of left vertices of $X_j\cap (D_j\setminus D)$. In other words, $A$ is the set of vertices of $X_j$ that are in $D_j$ but not in $D$. 
    \item $B$ is the set of right vertices of $X_j$. In other words, $B$ is the set of vertices of $X_j$ that also appear in a bag $X_{j'}$ with $j' > j$.
    \item $C$ is the set of left vertices of $D\setminus D_j$. In other words, $C$ is the set of vertices of $D$ at the left of $X_j$ that are missing in $D_j$. 
    \end{itemize}
We partition again $B$ into three parts:
\begin{itemize}
    \item $B_1$ is the set of vertices of $B\setminus D$ that are dominated by $C$
    \item $B_2=B\cap D_j$
    \item $B_3=B\setminus (B_1\cup B_2)$.
\end{itemize}

\begin{figure}[bt]
	\centering
	\scalebox{0.9}
	{
	\begin{tikzpicture}[scale=0.08]
    
    \draw (0,0) circle (7cm);
    \draw (0,-40) circle (7cm);
    \draw (20,-20) circle (7cm);
    \draw (50,-20) circle (7cm);
    \draw (100,-20) circle (7cm);
    \draw (13,-20) node[left]{$X_j$};
    \draw (100,-27) node[below]{$X_b$};
    \draw (75,-20) node{$\ldots$};
    \draw ({cos(360-45)*7}, {sin(360-45)*7}) -- ({20+cos(90+45)*7}, {-20+sin(90+45)*7});
    \draw ({cos(45)*7}, {-40+sin(45)*7}) -- ({20+cos(180+45)*7}, {-20+sin(180+45)*7});
    \draw (27,-20) -- (43,-20);
    
    \draw[very thick,color=palette_bleuf] (2,2) -- (22,-16);
    \draw[very thick,color=palette_bleuf] (22,-16) -- (48,-16);
    \draw[very thick,color=palette_bleuf] (24,-18) -- (60,-18);
    
    \draw[very thick,dotted,color=palette_rouge] (2,-41) -- (18,-24);
    \filldraw[color=palette_rouge](21,-24) circle (10pt);
    
    \draw[thick,dashed, color=palette_vert] (-2,-2) -- (18,-20);
    \draw[thick,dashed, color=palette_vert] (-4,2) -- (-16,2);
    
    \draw[thick,dashed, color=palette_vert] (-60,-5) -- (-45,-5);
    \draw[thick,dotted, color=palette_rouge] (-60,-7) -- (-45,-7);
    
    \draw[thick,dashed, color=palette_vert] (-70,-32) -- (-52,-32);
    \draw[thick,dotted, color=palette_rouge] (-70,-34) -- (-52,-34);
    
    \draw[dashed, color=palette_vert] (-90,-21) -- (-62,-21);
    \draw[dotted, color=palette_rouge] (-90,-23) -- (-62,-23);
    
    \draw[color=palette_rouge] (22,-20) to[out=290,in=70] (22,-43);
    \draw[color=palette_rouge] (24,-31.5) node[right]{$A$};
    \draw[color=palette_vert] (-18,4) to[out=250,in=110] (-18,-22);
    \draw[color=palette_vert] (-22,-9) node[left]{$C$};
    \draw[color=palette_bleuf] (62,-20) to[out=70,in=290] (62,-14);
    \draw[color=palette_bleuf] (64,-17.5) node[right]{$B$};
    
	\end{tikzpicture}}
    \caption{The tree decomposition of $G$, and the sets $A$, $B$ and $C$. The circles represent the bags of the tree decomposition. The vertices are represented by lines, or dots, that go along the bags they belong to. The thick full lines represent the vertices of $B$, the dashed lines represent the vertices of $D$, and the dotted lines represent de vertices of $D_j$. By induction hypothesis, the left vertices of $D_j$ that do not belong to $X_j$ belong to $D$. }
	\label{fig:tw}
\end{figure}
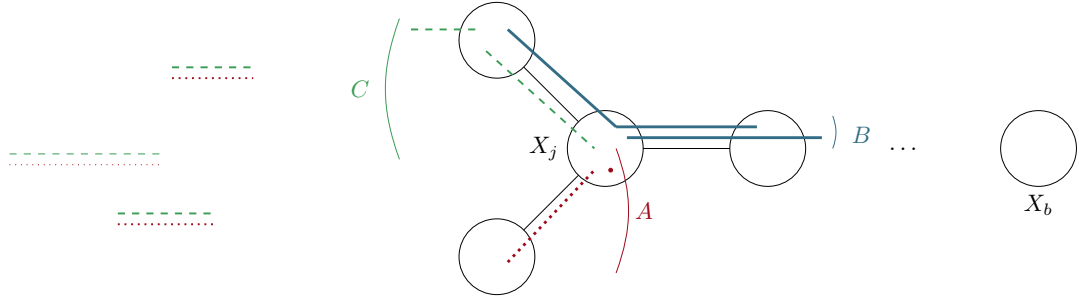

We set $D_j'=(D_j\setminus A)\cup C \cup B_3$. Let us first prove that $D_j'$ is a dominating set of $G$.

\begin{claim}\label{cl:Dr'}
The set $D_j'$ is a dominating set of $G$.
\end{claim}

\begin{proofclaim}
Since $D_j$ is a dominating set of $G$ and $D_j \setminus A \subseteq D_j'$, the only vertices that can be undominated in $D_j'$ are the ones dominated only by vertices of $A$ in $D_j$. Let $N_r(A)$ (resp. $N_l(A)$)) be the right vertices (resp. left vertices) that are only dominated by $A$ in $D_j$. Note that $N_l(A)$ might contain vertices of $A$, while $N_r(A)$ does not, since by definition the vertices of $A$ are left vertices. Let us show that all the vertices in $N_r(A) \cup N_l(A)$ are dominated by $D_j'$.

We start with $N_r(A)$. Since the vertices of $A$ are left vertices and the vertices of $N_r(A)$ are right vertices, by Claim~\ref{rk:LeftRight}, we have $N_r(A)\subseteq X_j$. Since the vertices in $N_r(A)$ are right vertices, we have $N_r(A)\subseteq B$. Moreover, since every vertex of $N_r(A)$ is only dominated by $A$ in $D_j$ but does not belong to $A$, it is not in $D_j$ and thus not in $B_2$. Thus, the vertices of $N_r(A)$ either belong to $B_1$ (and are by definition dominated by $C$), or they belong to $B_3$. Therefore, $N_r(A)$ is dominated by $C \cup B_3$ and thus by $D_j'$.

Let us now focus on $N_l(A)$. In $D$, $N_l(A)$ is dominated by vertices that we partition into two sets: the right vertices $Y$ and the left vertices $Z$. We show that both $Y$ and $Z$ are included in $D_j'$, which implies that $D_j'$ dominates $N_l(A)$. Since the vertices of $N_l(A)$ are left vertices and the vertices of $Y$ are right vertices, Lemma \ref{rk:LeftRight} gives $Y\subseteq X_j$. Thus, by definition, $Y\subseteq B$. Moreover, the vertices of $Y$ that belong to $D_j$ do not belong to $A$ as they are right vertices, and thus belong to $D_j\setminus A$, and the vertices of $Y$ that do not belong to $D_j$ belong by definition to $B\cap(D\setminus D_j)\subseteq B_3$. Thus, $Y\subseteq (D_j\setminus A)\cup B_3\subseteq D_j'$. Finally, the vertices of $Z$ either belong to $D_j$ and thus by definition to $D_j\cap D\subseteq D_j\setminus A$, or they do not belong to $D_j$ and by definition they thus belong to $C$. Therefore, $Z\subseteq (D_j\setminus A)\cup C\subseteq D_j'$. Therefore, $N_l(A)$ is dominated by $D_j'$, which concludes the proof of this claim.
\end{proofclaim}

Let us now prove the following:

\begin{claim}\label{cl:valueofk}
$|D_j\cup C \cup B_3|\leq \Gamma(G)+tw(G)+1.$
\end{claim}

\begin{proofclaim}
Let us first show that the set $D':=(D\setminus C)\cup A\cup B_1\cup B_2$ is a dominating set of $G$. We will then explain how to exploit this property to prove that $|D_j\cup C \cup B_3|\leq \Gamma(G)+tw(G)+1$.

Since $D$ is a dominating set, the only vertices that can be undominated in $(D\setminus C)\cup A\cup B_1\cup B_2$ are vertices that are only dominated by $C$ in $D$. Let $N_r(C)$ (resp. $N_l(C)$) be the subset of right (resp. left) vertices that are only dominated by $C$ in $D$. Note that $N_l(C)$ might contain vertices of $C$ and $N_r(C)$ does not, since the vertices of $C$ are left vertices. We prove that $N_r(C)$ and $N_l(C)$ are dominated by $D'$. 

We first prove that the vertices of $N_r(C)$ are dominated in $D'$.  Since $C$ only contains left vertices and $N_r(C)$ only contains right vertices, Claim~\ref{rk:LeftRight} ensures that $N_r(C)\subseteq X_j$. Thus, by definition of $B$, $N_r(C)\subseteq B$. Since the vertices of $N_r(C)$ are only dominated by $C$ in $D$, $N_r(C) \subseteq B_1$. Therefore $(D\setminus C)\cup A\cup B_1\cup B_2$ dominates $N_r(C)$.

Let us now prove that $N_l(C)$ is dominated in $D'$. Every vertex $v\in N_l(C)$ is dominated in $D_j$ by either a right vertex or a left vertex. Assume that $v$ is dominated in $D_j$ by a right vertex $w$. Since $v$ is a left vertex and $w$ a right vertex, Claim~\ref{rk:LeftRight} ensures that $w \in X_j$ and thus $w \in B$. Since $w \in D_j$, $w \in B_2\subseteq D'$. 
Assume now that $v$ is dominated in $D_j$ by a left vertex $u$. If $u$ belongs to $D$, it is in $D\cap D_j\subseteq D\setminus C \subseteq D'$. So we can assume that $u \notin D$. By induction hypothesis, $D_j$ satisfies (iii) and since $u \notin D$, the vertex $u$ necessarily belongs to $X_j$. So we finally have $u \in A$. Thus, $u \in  (D\setminus C)\cup A\subseteq D'$. So $N_l(C)$ is dominated in $D'$. And then $D'$ is a dominating set of $G$.

We can now show that $|D_j\cup C \cup B_3|\leq \Gamma(G)+tw(G)+1$. Since $D$ is a minimum dominating set of $G$ and $D'=(D\setminus C)\cup (A\cup B_1\cup B_2)$ also is a dominating set of $G$, we have $|C|\leq |A\cup B_1\cup B_2|$. Thus, $|C \cup B_3|\leq |A|+|B_1\cup B_2|+|B_3|$. But $A$, $B_1\cup B_2$ and $B_3$ are pairwise disjoint subsets of $X_j$. Thus, $|A|+|B_1\cup B_2|+|B_3|\leq |X_j| \leq tw(G)+1$, and $|C \cup B_3|\leq tw(G)+1$. Since, by induction hypothesis, $D_j$ has size at most $\Gamma(G)$, this gives $|D_j\cup C \cup B_3|\leq \Gamma(G)+tw(G)+1$.
\end{proofclaim}

We now have a reconfiguration sequence of size at most $tw(G)+1$ from $D_j$ to $D_j'$ by simply adding all the vertices of $C \cup B_3$ and then removing all the vertices of $A$. All along the sequence, the corresponding set is dominating. Indeed, it contains $D_j$ during the first part and $D_j'$ during the second one. One is dominating by assumption and the other is dominating by Claim~\ref{cl:Dr'}. By Claim~\ref{cl:valueofk}, this reconfiguration sequence exists in $\mathcal{R}_{\Gamma(G)+tw(G)+1}(G)$.

The dominating set $D_{j+1}$ will be any dominating set of size at most $\Gamma(G)$ obtained from $D_j'$ by removing vertices, i.e.\ any dominating set $D_{j+1}$ satisfying $D_{j+1} \subseteq D_j'$ and $|D_{j+1}|=\Gamma(G)$, which necessarily exist by definition of $\Gamma(G)$.
This can be done in at most $tw(G)+1$ deletions. Thus, there exist a sequence in $\mathcal{R}_{\Gamma(G)+tw(G)+1}(G)$ from $D_j$ to $D_{j+1}$ of length at most $2(tw(G)+1)$, and $D_{j+1}$ thus satisfies (i) and (ii). Let us now justify why $D_{j+1}$ satisfies (iii).

Since $D_{j+1}$ is a subset of $D_{j}'$, if (iii) holds for $D_j'$ it holds for $D_{j+1}$. We have $D_j'=(D_j\setminus A)\cup C \cup B_3$.
Since $C \subseteq D$, if a left vertex $v$ (for $X_j$) appears in $D_j'$ but not in $D$, it is either in $D_j \setminus A$ or in $B_3$. Since $B_3$ only contains right vertices, it must be in $D_j \setminus A$. Since $A$ contains the left vertices of $X_j \cap (D_j \setminus D)$, it means that $v$ should be in $V_{j-1}$. But, by induction hypothesis, the vertices of $D_j$ that belong to $V_{j-1}$ belong to $D$. So $v$ does not exists and $D_j'$ satisfies (iii). Thus, $D_{j+1}$ satisfies property $\mathcal{P}$, and by induction, there exists a set $D_{b}$ that satisfies property $\mathcal{P}$. Moreover, since for any $i$ such that $2\leq i\leq b$, there is a path of length at most $2(tw(G)+1)$ from $D_{i-1}$ to $D_{i}$ in $\mathcal{R}_k(G)$, there is transformation of length at most $2(b-1)\cdot(tw(G)+1)$ from $D_\source$ to $D_b$ in $\mathcal{R}_k(G)$. 
\medskip

To complete the construction of a path from $D_\source$ to $D$ in $\mathcal{R}_k(G)$, we show that there exists a transformation from $D_b$ to $D$ in $\mathcal{R}_k(G)$ of length at most $2(tw(G)+1)$.
Let $A'=D_b\setminus D$, and $C'=D\setminus D_b$. We have $D=(D_b\cup C')\setminus A'$. Let $S_1'$ be the reconfiguration sequence from $D_b$ to $D_b\cup C'$ which consists in adding one by one every vertex of $C'$. Since each of the sets of $S_1'$ contains $D_b$, they are all dominating sets of $G$. Note that $S_1'$ has length $|C'|$. Let $S_2'$ be the reconfiguration sequence from $D_b\cup C'$ to $D$ which consists in removing one by one each vertex of $A'$. Since each of the sets of $S_2'$ contains $D$, they all are dominating sets. Note that $S_2'$ has length $|A'|$. Thus, applying $S_1'$ then $S_2'$ gives a reconfiguration sequence from $D_b$ to $D$ of length $|C'|+|A'|$. Moreover, the maximum size of a dominating set reached in this sequence is $|D_b\cup C'|$. Let us show that $|D_b\cup C'|\leq \Gamma(G)+tw(G)+1$. We have $D_b=(D\setminus C')\cup A'$. Thus, since $D$ is a minimum dominating set, $|C'| \le |A'|$. Since $D_b$ satisfies (iii), every vertex of $D_b$ that does not belong to $X_b$ also belongs to $D$. Thus, $A'\subseteq X_b$, and $|A'|\leq tw(G)+1$, which gives $|C'|\leq tw(G)+1$, as well as $|C'|+|A'|\le 2(tw(G)+1)$. Since $D_b$ is a minimal dominating set of $G$, we have therefore $|D_b\cup C'|\leq \Gamma(G)+tw(G)+1$. Thus, there is a path of length at most $2(tw(G)+1)$ from $D_b$ to $D$ in $\mathcal{R}_k(G)$ which completes the transformation of length at most $2b\cdot(tw(G)+1)$ from $D_\source$ to $D$ in $\mathcal{R}_k(G)$. Since $b\leq n$, the conclusion follows.
\end{proof}

\bibliographystyle{abbrv}
\bibliography{bibliography}

\end{document}